\titleformat*{\section}{\bf\Large\center}
\def\T{{ \mathrm{\scriptscriptstyle T} }}
\newtheorem{lemma}{Lemma}
\newtheorem{proposition}{Proposition}
\apptocmd{\sloppy}{\hbadness 10000\relax}{}{} 
\newcites{sec}{References}
\title{\bf Fairness-Aware Kidney Exchange and Kidney Paired Donation}
\author{Mingrui Zhang, Xiaowu Dai, and Lexin Li\footnote{Mingrui Zhang, Division of Biostatistics, University of California, Berkeley, CA 94720 U.S.A. (E-mail: mingrui\_zhang@berkeley.edu). Xiaowu Dai, Department of Statistics and Data Science and Department of Biostatistics, University of California, Los Angeles, CA 90095 U.S.A. (E-mail: dai@stat.ucla.edu). Lexin Li, University of California, Berkeley, CA 94720 U.S.A. (E-mail: lexinli@berkeley.edu)
}
}
\date{}
\begin{document}
\onehalfspacing

\newpage

 \maketitle

\begin{abstract}
The kidney paired donation (KPD) program provides an innovative solution to overcome incompatibility challenges in kidney transplants by matching incompatible donor-patient pairs and facilitating kidney exchanges. To address unequal access to transplant opportunities, there are two widely used fairness criteria: group fairness and individual fairness. However, these criteria do not consider protected patient features, which refer to characteristics legally or ethically recognized as needing protection from discrimination, such as race and gender. Motivated by the calibration principle in machine learning, we introduce a new fairness criterion: the matching outcome should be conditionally independent of the protected feature, given the sensitization level. We integrate this fairness criterion as a constraint within the KPD optimization framework and propose a computationally efficient solution using linearization
strategies and column-generation methods. Theoretically, we analyze the associated price of fairness using random graph models. Empirically, we compare our fairness criterion with group fairness and individual fairness through both simulations and a real-data example.

\end{abstract}
\medskip 
\noindent 
{\bf Keywords}: 
calibration, column generation, integer program, kidney paired donation, price of fairness, random graph. 

\section{Introduction}

\subsection{Kidney paired donation programs}

Kidney transplantation is the preferred treatment for end-stage renal disease (ESRD), offering significant improvements in both quality of life and survival compared to dialysis. However, a major obstacle is the incompatibility between donors and patients, often due to mismatches in blood type or human leukocyte antigens (HLA). According to the United Network for Organ Sharing (UNOS) and the Organ Procurement and Transplantation Network (OPTN), over 90,000 patients were on the kidney transplant waiting list at the end of 2023.

To overcome these incompatibility challenges, kidney paired donation (KPD) programs have been developed as an innovative solution. These programs match incompatible donor-patient pairs using a \textit{virtual crossmatch}, a preliminary compatibility test, and facilitate kidney exchanges, allowing patients to receive kidneys from compatible donors through mutual exchange. These exchanges are called \textit{exchange cycles} or simply \textit{cycles}, with formal definitions provided in Section \ref{sec::method}. The primary goal of KPD programs is to maximize the number of successful transplants or optimize generalized utilities based on predicted transplantation or survival outcomes. 

Despite their promise, planned cycles may fail for various reasons, such as illness, pregnancy, or the death of a patient or donor, scheduling conflicts, or discrepancies between \textit{virtual} and \textit{laboratory crossmatch} results. These uncertainties necessitate the consideration of \textit{recourse} strategies, which identify alternative transplant opportunities within the original cycle. Even when an entire cycle cannot proceed, smaller unaffected sub-cycles may still be viable. By accounting for these uncertainties, KPD programs can further maximize the expected number of successful transplants or optimize the expectation of some general utilities. 

Following the framework of \cite{klimentova2016maximising}, KPD programs adopt three main recourse strategies. The first is \textit{no recourse}, where a cycle either proceeds fully or fails entirely \citep[e.g.][]{li2014optimal}. The second, \textit{internal recourse}, identifies the sub-cycle with the highest utility among those unaffected by the failure \citep[e.g.][]{pedroso2014maximizing}. The third, \textit{subset recourse}, considers broader subsets that may include multiple cycles, enabling alternative arrangements when uncertainties arise \citep[e.g.][]{bray2018valuing}. Each strategy computes expected utilities accordingly, with KPD programs aiming to maximize these expected utilities. 

In addition to recourse strategies, we can integrate uncertainties into the optimization framework through other approaches, such as look-ahead strategy \citep{wang2017look} and robust optimization \citep{mcelfresh2019scalable}. 

\subsection{Fairness concerns in KPD programs}

Despite the success of KPD programs, important fairness concerns arise, particularly regarding unequal access to transplant opportunities. These disparities stem from two main factors. The first factor is differences in patients' HLA sensitization levels. A patient's HLA sensitization level is measured by their panel-reactive antibody (PRA) score, which reflects the likelihood of HLA incompatibility with a random donor. Patients with high PRA scores, referred to as highly sensitized patients, are more difficult to match and, therefore, have fewer transplant opportunities. In contrast, patients with low PRA scores, known as lowly-sensitized patients, are easier to match and benefit from more transplant opportunities. The second factor is asymmetric blood type compatibility. According to standard ABO blood type compatibility rules, patients with blood type O are harder to match because they can only receive kidneys from O-type donors. Conversely, patients with blood type AB are the easiest to match, as they can receive kidneys from donors of any blood type. 

To address these disparities, KPD programs can incorporate fairness constraints to reduce unfairness caused by differences in HLA sensitization and blood type. Two widely used fairness criteria in KPD programs are group fairness and individual fairness. Group fairness focuses on ensuring that highly-sensitized patients receive equitable consideration relative to lowly-sensitized patients \citep{dickerson2014price, mcelfresh2019scalable, freedman2020adapting}. In contrast, individual fairness aims to provide balanced selection chances for each patient, ensuring that no one is unfairly disadvantaged \citep{farnadi2021individual}. While these two fairness criteria are central to addressing disparities in kidney exchange, other approaches have also been explored. For example, \cite{st2022adaptation} incorporate the Nash standard of comparison (or proportional fairness) and Rawlsian justice principles. \cite{ashlagi2014free}, \cite{klimentova2021fairness}, and \cite{carvalho2023theoretical} draw on game theory to address fairness within utility-maximization frameworks, considering the interests of stakeholders such as hospitals and regions. 

In the context of fair machine learning, a \textit{protected feature} (or sensitive attribute) refers to a characteristic legally or ethically recognized as needing protection from discrimination. Our key question is how to establish a fairness criterion that ensures equal access to transplant opportunities across patient groups defined by protected characteristics---and how to achieve this in practice. This specific focus has not yet been explored in the KPD literature, as existing fairness criteria do not account for protected patient features.

Some protected features, such as race and gender, are associated with differences in sensitization levels and blood types. For instance, studies have shown that parous women are more likely to develop high sensitization to HLA antigens \citep{bromberger2017pregnancy}, making them less compatible with most donors in a KPD program and harder to match. This leads to unequal access to transplant opportunities between females and males. A simplistic approach might aim to balance overall selection rates between genders, but this could inadvertently disadvantage highly-sensitized male patients, as a higher number of highly-sensitized female patients would need to be matched to achieve gender balance. A more equitable approach would balance selection rates within subgroups, such as highly-sensitized females versus males and lowly-sensitized females versus males.

Motivated by this example, we propose a new fairness criterion: the matching outcome should be conditionally independent of the protected feature, given the sensitization level. The randomness associated with this fairness criterion is determined by a randomization policy, as proposed for individual fairness in kidney exchange \citep{farnadi2021individual, st2022adaptation} and general matching problems \citep{garcia2020fair, karni2021fairness}. This approach provides guarantees for average selection rates within protected groups across each sensitization level.

\subsection{Fairness in general decision-making problems}
Our fairness criterion in KPD programs is defined based on the conditional outcome given protected features, drawing on similar concepts from the literature on general decision-making problems. 

\textit{Demographic parity} ensures fairness by requiring that the rate of positive decisions is consistent across groups defined by protected features, promoting equality in outcomes regardless of group membership. \textit{Equalized odds}, introduced by \cite{hardt2016equality}, aligns predictive performance such that the false positive and false negative rates are similar across groups, leading to a fair distribution of errors. \textit{Predictive parity}, discussed by \cite{chouldechova2017fair}, ensures fairness by equalizing the positive predictive value (PPV) across groups, thereby making positive predictions equally reliable and trustworthy for all groups. \textit{Calibration within groups}, explored by \cite{kleinberg_et_al:LIPIcs.ITCS.2017.43}, requires that individuals with the same predicted probability have consistent actual outcome rates across groups, ensuring well-calibrated predictions.

These fairness concepts emphasize different priorities: overall outcome equality, error distribution, prediction reliability, or probability calibration. They are frequently incorporated as fairness constraints in statistical optimization problems \citep[e.g.,][]{liebl2023fast}. 

Our fairness criterion in KPD programs is closely aligned with calibration within groups; see a more detailed discussion in Section \ref{subsec::fairness_definition}.

\subsection{Our contributions}\label{subsec::contributions}

This paper makes several contributions to the field of kidney exchange and fairness in allocation. First, we propose a new fairness criterion based on a protected feature, which has not been explored in the kidney exchange literature. We integrate this fairness criterion as a constraint within the optimization framework commonly used in kidney paired donation (KPD) programs. This flexible structure can accommodate other fairness criteria and potential recourse strategies. Furthermore, we propose a computationally efficient solution to the resulting optimization problem, achieving scalability by avoiding explicit enumeration of all exchange plans and instead relying on an iterative cutting-plane procedure.

Second, we investigate the \textit{price of fairness} associated with our proposed criterion, defined as the relative loss in system efficiency when a fair allocation is prioritized over an optimal (unconstrained) allocation \citep{bertsimas2011price}. Theoretically, we derive an upper bound on the asymptotic price of fairness using random graph models that incorporate ABO blood type distributions. Empirically, through simulation studies, we show that the efficiency loss from implementing our fairness criterion is relatively low. 

Our findings align with prior studies on the tradeoff between efficiency and fairness in resource allocation. For example, \cite{dickerson2014price} examine the efficiency loss associated with group fairness in kidney exchange, while \cite{ashlagi2014free} analyze the price of ensuring individual rationality in multi-hospital kidney exchanges, both employing random graph models with ABO blood types. \cite{st2022adaptation} utilize the Nash Social Welfare Program to address the tradeoff between fairness and efficiency. Similarly, \cite{viviano2024fair} propose a framework for fair policy targeting that balances fairness and efficiency using Pareto optimal treatment allocation rules, offering theoretical guarantees and practical solutions applicable to social welfare contexts.

\section{Review of KPD program in an optimization framework}\label{sec::method}

In this section, we review the notations, terminologies, and optimization framework used in KPD programs. Section \ref{subsec::classical} focuses on the classical optimization problem without incorporating fairness constraints, while Section \ref{subsec::fairness-review} reviews group fairness and individual fairness, introducing an additional constraint to integrate fairness into the framework. 

\subsection{KPD program without fairness}\label{subsec::classical}

\subsubsection{Classical formulation}

We represent a KPD program as a directed graph $G=(V,E)$, where $V=\{v_1,\cdots ,v_N\}$ denotes the vertex set and $E$ denotes the edge set. The vertex set $V$ is the set of $N$ incompatible donor-patient pairs. The edge set $E$ consists of all ordered pairs $(v_i,v_j)$ that the donor in vertex $v_i\in V$ is compatible with the patient in vertex $v_j\in V$. An \textit{exchange cycle}, or simply \textit{cycle}, is defined as a sequence of distinct incompatible donor-patient pairs. We denote a cycle $c$ as an ordered sequence of vertices $[c_1,\cdots ,c_{|c|}]$ in $V$, where $|c|$ is the cycle length of $c$, satisfying that $(c_i,c_{i+1})\in E$ for $1\leq i\leq |c|-1$, and $(c_{|c|},c_1)\in E$. To execute the cycle, the patient in $c_{i+1}$ will receive the donor kidney of $c_i$ for $1\leq i\leq |c|-1$, and the patient $c_1$ will receive the donor kidney of $c_{|c|}$. Figure \ref{fig::cycles} illustrates how a two-way cycle and a three-way cycle work in KPD programs. An \textit{exchange plan} is a collection of vertex-disjoint cycles in the graph. 

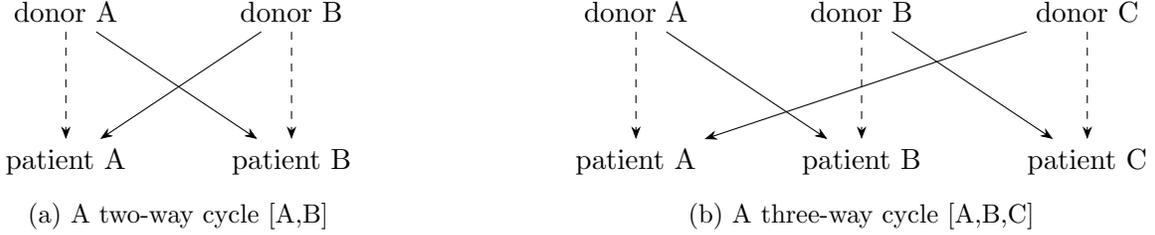
\begin{figure}
    \centering
    \begin{subfigure}[b]{0.35\textwidth}
        \centering
\begin{tikzpicture}

    \node (donorA) at (0, 2) {donor A};
    \node (donorB) at (3, 2) {donor B};
    \node (patientA) at (0, 0) {patient A};
    \node (patientB) at (3, 0) {patient B};

    \draw[dashed, -{Stealth}] (donorA) -- (patientA);
    \draw[dashed, -{Stealth}] (donorB) -- (patientB);

    \draw[-{Stealth}] (donorA) -- (patientB);
    \draw[-{Stealth}] (donorB) -- (patientA);

\end{tikzpicture}
        \caption{A two-way cycle [A,B]}
    \end{subfigure}
    \hfill
    \begin{subfigure}[b]{0.55\textwidth}
        \centering
\begin{tikzpicture}

    \node (donorA) at (0, 2) {donor A};
    \node (donorB) at (3, 2) {donor B};
    \node (donorC) at (6, 2) {donor C};
    
    \node (patientA) at (0, 0) {patient A};
    \node (patientB) at (3, 0) {patient B};
    \node (patientC) at (6, 0) {patient C};

    \draw[dashed, -{Stealth}] (donorA) -- (patientA);
    \draw[dashed, -{Stealth}] (donorB) -- (patientB);
    \draw[dashed, -{Stealth}] (donorC) -- (patientC);

    \draw[-{Stealth}] (donorA) -- (patientB);
    \draw[-{Stealth}] (donorB) -- (patientC);
    \draw[-{Stealth}] (donorC) -- (patientA);

\end{tikzpicture}
        \caption{A three-way cycle [A,B,C]}
    \end{subfigure}
    
    \caption{Illustration of exchange cycles shown with solid arrows. Transplantations along the dashed arrows cannot proceed due to incompatibility. }\label{fig::cycles}
\end{figure}

In fielded KPD programs, we usually restrict cycles to be no longer than 3. We use $\mathcal{C}$ to denote the set of all cycles under the restrictions on the length of the cycles. We assign a utility $u_{v_i,v_j}$ to each edge $(v_i,v_j)\in E$. Based on the edge utility, we define the cycle utility $u(c)=\sum_{i=1}^{|c|-1}u_{c_i,c_{i+1}}+u_{c_{|c|},c_1}$. In a classical KPD program, we aim to select a disjoint packing of cycles in $\mathcal{C}$ with the maximum sum of utilities, which can be formulated as the following integer program: 
\begin{equation}
\label{eq::cycle}
\max_{x_c\in\{0,1\}}\quad \sum_{c\in \mathcal{C}}x_cu(c) \quad \text{subject to}\quad \sum_{c\in\mathcal{C}}x_c1_{v\in c}\leq 1,\ \forall\ v\in V.
\end{equation}
The binary variable $x_c$ indicates whether $c$ is selected or not. The objective function in \eqref{eq::cycle} is the sum of utilities of the selected cycles. The constraint in \eqref{eq::cycle} requires that the selected cycles are vertex-disjoint. 

\subsubsection{Extension: incorporating recourse strategies}\label{subsec::extension}

To account for uncertainty, we consider the expected utilities in \eqref{eq::cycle}, depending on the chosen recourse strategy. Different strategies lead to different calculations of expected utilities. 

\paragraph{No-recourse strategy.} 
Under the no-recourse strategy, the expected utility is determined without adapting to failures. Given the failure probabilities of each vertex in \( V \) and each edge in \( E \), we can compute the probability of each cycle in \( \mathcal{C} \) being executable, where all vertices and all edges in the cycle do not fail. An explicit formula is available to compute the expected utility for this strategy \citep{li2014optimal, klimentova2016maximising}.

\paragraph{Internal-recourse strategy.}
The internal-recourse strategy considers adapting to failures within a given cycle. For a cycle \( c \), let \( \mathcal{M}(c) \) represent the set of all exchange plans involving vertices in \( c \). Without loss of generality, assume the elements of \( \mathcal{M}(c) \) are ordered by descending utility. Specifically, the utility of \( M_i(c) \), the \( i \)-th element in \( \mathcal{M}(c) \), is no less than that of \( M_j(c) \), the \( j \)-th element, for \( i < j \). If all vertices and edges in \( M_1(c) \) do not fail, \( M_1(c) \) will be executed. If any vertices or edges in \( M_k(c) \) fail but those in \( M_{k+1}(c) \) do not fail, \( M_{k+1}(c) \) will be executed (\( k \geq 1 \)). While there is no explicit formula for this strategy, efficient algorithms are available for computing the expected utility \citep{li2014optimal, pedroso2014maximizing, klimentova2016maximising}.

\paragraph{Subset-recourse strategy.}
The subset-recourse strategy expands the search space to disjoint \textit{relevant subsets} rather than disjoint cycles, providing more flexibility in adapting to uncertainty. Formally, a relevant subset of size \((k, q)\) is the set of at most \((k+q)\) vertices in graph $G$ inducing a strongly connected subgraph, such that any edge of the paths that provide the strong connectivity belongs to some cycle of size at most $k$, whose vertices are in the subset. We follow the same approach as the internal-recourse strategy to compute the expected utility of each relevant subset. To solve the optimization problem in \eqref{eq::cycle}, we must enumerate the new set $\mathcal{C}$, with algorithms given in \cite{klimentova2016maximising} and \cite{wang2019efficient}.

\subsection{KPD program with fairness constraint}\label{subsec::fairness-review}

\subsubsection{Group fairness}\label{subsec::group}

The group fairness aims to ensure fairness towards highly-sensitized patients. We partition the vertex set $V$ into $V_h\cup V_e$, where $V_h$ denotes the set of incompatible donor-patient pairs with highly-sensitized patients, and $V_e$ denotes the set of incompatible donor-patient pairs with lowly-sensitized patients. Following \cite{dickerson2014price}, we formulate the group fairness as a constraint in the optimization problem below
\begin{equation}
\label{eq::group}
\max_{x_c\in\{0,1\}}\  \sum_{c\in \mathcal{C}}x_cu(c)\quad \text{subject to}\quad \sum_{v\in V_h}\sum_{c\in \mathcal{C}}x_c1_{v\in c}\geq \alpha,\  \sum_{c\in \mathcal{C}}x_c1_{v\in c}\leq 1,\ \forall\ v\in V.
\end{equation}
Specifically, we consider a fairness constraint that the number of highly-sensitized patients involved in the matching is no less than some threshold $\alpha$.

\subsubsection{Individual fairness}\label{subsec::individual}
The individual fairness utilizes a randomization policy to ensure that every patient has a similar chance of being matched. Let $\mathcal{F}\subseteq 2^{\mathcal{C}}$ be the set of possible exchange plans that consist of disjoint cycles in $\mathcal{C}$. We assign a probability distribution $\delta$ on $\mathcal{F}$, which indicates the probability of selecting each exchange plan in $\mathcal{F}$. Based on $\delta$, let $x_c$ be the overall probability that the cycle $c\in \mathcal{C}$ is selected as one of the disjoint subsets. Further, we can compute $\pi_v=\sum_{c\in\mathcal{C}}x_c1_{v\in c}$, which is the probability of each vertex $v$ to be selected. Let $\pi=(\pi_v)_{v\in V}\in\mathbb{R}^{N}$ be the vector of selection probability of all patients and let $\overline{\pi}=1^{\T}\pi/N$ be the average selection probability of all patients. The quantity $\pi_v$ is central in defining individual fairness and our new fairness criterion in Section \ref{sec::new}. 

Recall that $x_c$ in \eqref{eq::cycle} and \eqref{eq::group} is a binary variable indicating whether $c$ is selected or not, but here $x_c$ is a continuous variable bounded between 0 and 1. Following \cite{farnadi2021individual}, we formulate the individual fairness as a constraint in the optimization problem below
\begin{equation}
\label{eq::individual}
\max_{\delta}\quad \sum_{c\in \mathcal{C}}x_cu(c)\quad \text{subject to}\quad \sum_{v\in V}|\pi_v-\overline{\pi}|^p\geq \beta^p.
\end{equation}
Specifically, the individual fairness promotes a similar chance of being selected for each patient. We choose the $L_p$ norm of the vector $(\pi-\overline{\pi})$ to measure the variation of the selection probability among all the patients. We consider a fairness constraint that the $L_p$ norm of the vector $(\pi_v-\overline{\pi})$ is no less than some threshold $\beta$.

\section{A new fairness criterion based on a protected feature}\label{sec::new}

\subsection{Fairness formulation}\label{subsec::fairness_definition}

In this section, we introduce a new fairness criterion based on a protected feature $A$ of patients in $V$, motivated by the concept of \textit{calibration within group} in the context of machine learning. In essence, our new fairness criterion requires that the probability of a patient being selected for a transplant (the matching outcome) should not depend on the protected feature, such as race or gender, once we account for the patient’s sensitization level. This means that, within each sensitization category, patients with different protected attributes should have similar chances of being matched.

In binary classification, a score function $R(X)$ satisfies calibration within groups if $\mathrm{Pr}(Y=1\mid R(X)=r, A=a)=r$ for all score values $r$ and group level $a$, where $Y$ is the outcome variable and $X$ is the feature variables. A slightly weaker definition \citep{corbett2023measure} only requires that $Y$ is conditionally independent of $A$ given the score $R(X)$. Our fairness definition is based on this weaker definition. In KPD programs, we view $Y$ as the selection indicator and $R$ as the sensitization level of a patient. Therefore, we define our new fairness criterion as satisfying that the selection indicator is conditionally independent of the protected feature given the sensitization level. In other words, at each sensitization level, the matching outcome is independent of the protected feature. We can view this fairness criterion in KPD programs as a reverse problem of that in machine learning. In machine learning, the randomness is due to the underlying population model, and our goal is to construct a score $R(X)$ satisfying the fairness condition. In KPD programs, we need to determine the randomness such that the observed sensitization level satisfies the fairness condition. 

For simplicity of presentation, we assume the protected feature $A$ is binary with two levels $\{0,1\}$ and the sensitization $R$ has $M$ levels $\{r_1,r_2,\cdots ,r_M\}$. In practice, it is common that $M=2$ where we partition all patients into highly-sensitized and lowly-sensitized patients, or $M=3$ where we partition all patients into highly-sensitized, moderately-sensitized and lowly-sensitized patients. The variables $A$ and $R$ partition all patients into $2M$ subgroups, denoted as $\{V_{ij}\}_{0\leq i\leq 1,1\leq j\leq M}$, where $V_{ij}=\{v\in V: A(v)=i, R(v)=r_j\}$. As in Section \ref{subsec::individual}, we assign a probability distribution $\delta$ on the set of exchange plans $\mathcal{F}$, which indicates the probability of selecting each exchange plan in $\mathcal{F}$. Moreover, $\pi_v$ is the probability of each vertex $v$ to be selected in the end, and thus $|V_{ij}|^{-1}\sum_{v\in V_{ij}}\pi_v$ is average selection rate in $V_{ij}$. Given the sensitization level $R=r_j$, our fairness criterion restricts the average selection rates in $V_{0j}$ and $V_{1j}$ to be close. It is natural to impose constraints that the absolute difference in these average selection rates is bound by a constant $l_j$. Therefore, we formulate the problem under our fairness criterion as
\begin{equation}
\label{eq::fair-0}
\max_{\delta}\ \sum_{c\in \mathcal{C}}x_cu(c) \quad \text{subject to}\ \left| |V_{0j}|^{-1}\sum_{v\in V_{0j}}\pi_v-|V_{1j}|^{-1}\sum_{v\in V_{1j}}\pi_v\right| \leq l_j,\ \forall 1\leq j\leq M.
\end{equation}

For whichever choice of parameters $l_1,\cdots ,l_M\geq 0$, there always exists a solution to \eqref{eq::fair-0}. The strength of the fairness constraint depends on the parameters $l_1,\cdots l_M$. In practice, we suggest two candidate values for $l$'s: $l_j=0.5/\min\{|V_{1j}|, |V_{2j}|\}$ and $l_j=0.5/\max\{|V_{1j}|, |V_{2j}|\}$. We can interpret these two candidate values as the desired precision based on the larger or smaller subgroup. The former represents a weaker fairness constraint, and the latter represents a stronger one.

\subsection{Algorithm for solving \eqref{eq::fair-0}}
\label{subsec:algo_fair}

This subsection presents a linear optimization procedure for solving \eqref{eq::fair-0} without enumerating all exchange plans. The approach follows a master-subproblem structure similar to column generation. The master problem optimizes over the marginal probabilities of selecting each cycle and each vertex, while an auxiliary subproblem verifies whether the proposed solution is implementable and, if necessary, adds new linear constraints that refine the master problem. 

Specifically, we express \eqref{eq::fair-0} as an optimization problem in terms of the cycle-selection variables $x_c$, rather than the probability distribution $\delta$ over exchange plans. Recall that the selection probability of each vertex $v$ is 
\[
\pi_v = \sum_{c\in\mathcal{C}} 1_{v\in c}\,x_c.
\]
Both the objective and the fairness constraints are linear in $x_c$. A direct constraint of variables $x_c$ is
$0\leq x_c\leq 1$ for all $c\in\mathcal{C}$ and $0\leq \pi_v\leq 1$ for all $v\in V$. However, not every such vector $x=(x_c)_{c\in\mathcal{C}}$ satisfying the above constraint can be realized by a valid probability distribution over disjoint exchange plans.

To enforce implementability, we introduce a scalar variable $\eta\in\mathbb{R}$ that represents the best achievable expected utility at the current selection profile and iteratively add linear inequalities, called \emph{cuts}, to refine the feasible region. Let $\mathcal{B}^{\mathrm{opt}}$ and $\mathcal{B}^{\mathrm{feas}}$ denote the sets of indices for optimality and feasibility cuts generated so far; both sets start empty and expand monotonically. The restricted master problem is formulated as
\begin{equation}
\begin{aligned}
\max_{x,\pi,\eta}\quad & \eta\\
\text{subject to}\quad 
& \pi_v=\sum_{c\in\mathcal{C}}1_{v\in c}x_c\leq 1, \qquad \forall v\in V,\\
& 0\leq x_c\leq 1,\qquad \forall c\in\mathcal{C},\\
& \left|\,|V_{0j}|^{-1}\sum_{v\in V_{0j}}\pi_v - |V_{1j}|^{-1}\sum_{v\in V_{1j}}\pi_v\,\right| \le l_j,\qquad j=1,\ldots,M,\\
& \pi^{\T}\theta^{(\ell)} + \beta^{(\ell)}\geq \eta, \qquad \forall\,\ell\in\mathcal{B}^{\mathrm{opt}},\\
& \pi^{\T}\hat{\theta}^{(k)}+\hat{\beta}^{(k)}\leq 0, \qquad \forall\,k\in\mathcal{B}^{\mathrm{feas}},
\end{aligned}
\tag{MP}
\end{equation}
where $(\theta^{(\ell)},\beta^{(\ell)})\in\mathbb{R}^{N}\times\mathbb{R}$ and $(\hat{\theta}^{(k)},\hat{\beta}^{(k)})\in\mathbb{R}^{N}\times\mathbb{R}$ are coefficients provided by the auxiliary subproblem. 

Given a candidate $\pi$ from \textnormal{(MP)}, the auxiliary subproblem checks whether $\pi$ can be implemented by a valid distribution over exchange plans and determines the best achievable expected utility under that profile. For each plan $F_t\in\mathcal{F}$, let $u(F_t)$ denote its total utility and let $\delta(F_t)$ denote the probability that plan $F_t$ is selected. The auxiliary configuration problem is
\begin{equation}
\begin{aligned}
\max_{\delta}\quad & \sum_{F_t\in\mathcal{F}} u(F_t)\, \delta(F_t)\\
\text{subject to}\quad & \sum_{F_t\in\mathcal{F}} 1_{v\in F_t}\,\delta(F_t) = \pi_v, \qquad \forall v\in V,\\
& \sum_{F_t\in\mathcal{F}} \delta(F_t) = 1,\qquad \delta(F_t) \ge 0, \quad \forall F_t\in\mathcal{F},
\end{aligned}
\tag{SP}
\end{equation}
with the corresponding dual formulation
\begin{equation}
\begin{aligned}
\min_{\theta\in\mathbb{R}^N,\beta\in\mathbb{R}}\quad & \pi^{\T}\theta + \beta\\
\text{subject to}\quad & \beta + \sum_{v\in V}\theta_v\, 1_{v\in F_t} \ge u(F_t), \qquad \forall F_t\in\mathcal{F}.
\end{aligned}
\tag{SP--D}
\end{equation}

In practice, we solve the dual problem \textnormal{(SP--D)} rather than the primal \textnormal{(SP)}, since the primal has one variable for each exchange plan and cannot be handled directly. The dual problem is of manageable size and can be solved efficiently using column generation. The feasibility of \textnormal{(SP)} is inferred from the behavior of \textnormal{(SP--D)}: if the dual is bounded, then \textnormal{(SP)} is feasible, and if the dual is unbounded, then \textnormal{(SP)} is infeasible.

If \textnormal{(SP)} is feasible, strong duality ensures that the optimal values of \textnormal{(SP)} and \textnormal{(SP--D)} coincide. The resulting optimal dual pair $(\theta,\beta)$ defines a valid optimality cut $\pi^{\T}\theta + \beta \ge \eta$, which is added to \textnormal{(MP)}. If \textnormal{(SP)} is infeasible, then, by Farkas’ lemma, there exists a certificate $(\hat{\theta},\hat{\beta})$ such that $\hat{\beta}+\sum_{v\in V}\hat{\theta}_v 1_{v\in F_t}\le 0$ for all $F_t\in\mathcal{F}$ and $\pi^{\T}\hat{\theta}+\hat{\beta}>0$. The separating inequality $\pi^{\T}\hat{\theta}+\hat{\beta}\le 0$ is then added as a feasibility cut. 

The dual problem \textnormal{(SP--D)} contains one constraint for each feasible plan $F_t$, which makes it impossible to construct explicitly. Column generation avoids this by starting with a small subset of constraints and iteratively adding only those that are necessary for optimality. At each iteration, the algorithm identifies the most violated dual constraint under the current dual solution $(\theta,\beta)$.

Formally, given $(\theta,\beta)$, we search for a plan $F_t$ whose dual constraint
\[
\beta + \sum_{v\in V}\theta_v\, 1_{v\in F_t} \ge u(F_t)
\]
is most violated. This search is equivalent to solving a maximum-weight packing problem over vertex-disjoint cycles:
\[
\max_{y_c\in\{0,1\}}\ 
\sum_{c\in\mathcal{C}}\Bigl(u(c)-\sum_{v\in c}\theta_v\Bigr)y_c - \beta
\quad
\text{subject to}\quad 
\sum_{c\in\mathcal{C}} y_c\,1_{v\in c} \le 1,\quad \forall v\in V.
\]
Each binary variable $y_c$ indicates whether cycle $c$ is selected, and the constraint ensures that selected cycles are vertex-disjoint. If the optimal value of this problem is positive, the corresponding plan violates a dual constraint and is added to a restricted version of \textnormal{(SP--D)}. If the optimal value is nonpositive, all constraints in the full dual problem are satisfied. Because $(\theta,\beta)$ is already optimal for the restricted dual problem, and adding satisfied constraints cannot further decrease its objective value, $(\theta,\beta)$ must also be optimal for the full dual \textnormal{(SP--D)}.

The algorithm alternates between solving the master problem \textnormal{(MP)} and invoking the auxiliary subproblem. Each iteration produces a candidate $(x,\pi,\eta)$ that satisfies the fairness constraints in \eqref{eq::fair-0}. The auxiliary subproblem then determines whether this profile $\pi$ is implementable. If it is, the resulting dual solution provides an optimality cut that tightens the objective bound $\eta$ in \textnormal{(MP)}; otherwise, a feasibility cut removes the unattainable profile $\pi$ from the feasible region.

The process continues until no violated constraint can be found by the auxiliary subproblem. In practice, this occurs when the optimal value of the separation problem in the column-generation step is nonpositive, indicating that all dual constraints are satisfied. At this point, the restricted master problem is equivalent to the full master problem, and no further cuts or columns can improve the solution. 

From a theoretical standpoint, each iteration introduces a valid linear inequality that progressively refines the feasible region, and only finitely many distinct cuts can be generated since the set of feasible plans $\mathcal{F}$ is finite. Hence, the procedure terminates after a finite number of iterations. Upon termination, the current $\pi$ is guaranteed to be implementable by a valid distribution over exchange plans, the fairness constraints are satisfied, and the corresponding objective value $\eta$ equals the optimum of \eqref{eq::fair-0}. The final primal solution of \textnormal{(SP)} explicitly provides this distribution, yielding an optimal randomized policy that achieves both fairness and efficiency.

\subsection{Prediction of individual selection probability}\label{subsec::estimation}

The selection probability $\pi_v$ is a central quantity in defining both individual fairness and our new fairness criterion. Since a KPD program is not static but dynamic, a natural statistical question to ask is how to predict individual selection probabilities before more incompatible donor-patient pairs enter the pool for the next round of exchange allocation. We provide a solution based on sample splitting. The following discussion can accommodate any fairness criteria within our general framework.

Our method utilizes historical data of incompatible donor-patient pairs independent of the current and future pairs. Suppose the historical pool consists of incompatible donor-patient pairs denoted as $\{\tilde{v}_1,\cdots ,\tilde{v}_{N_0}\}$, and the current pool consists of pairs denoted as $\{v_1,\cdots ,v_{N_1}\}$. Assume that the exchange allocation occurs when the size of the incompatible pairs pool reaches $N$, where $N_1 < N < N_0 + N_1$. The historical pool $\{\tilde{v}_1,\cdots, \tilde{v}_{N_0}\}$ aggregates incompatible donor-patient pairs collected across potentially multiple past KPD rounds rather than a single pool. In each iteration $b$, we sample a subset of these historical vertices to simulate a feasible past pool of size $N-N_1$. The edge set $E^{(b)}$ is then determined over the combined vertex set $V^{(b)}=\{v_1,\cdots,v_{N_1}\}\cup \tilde{V}^{(b)}$, treating unobserved edges between vertices that never coexisted in historical rounds as absent. To construct $E^{(b)}$, we recompute donor-patient compatibilities among all vertices in $V^{(b)}$ using the same ABO and HLA matching criteria as in the real KPD graph. Each edge $(v_i,v_j)$ is included in $E^{(b)}$ if donor $v_i$ is compatible with patient $v_j$. The prediction procedure can be described in Algorithm \ref{alg1} below. 

\begin{algorithm}[H]
\caption{Prediction procedure with selection probabilities}\label{alg1}
\begin{algorithmic}[1]
\STATE \textbf{Input:} Historical pool of vertex set $\{\tilde{v}_1, \dots, \tilde{v}_{N_0}\}$, current pool of vertex set $\{v_1, \dots, v_{N_1}\}$, and number of repetitions $B$.
\STATE \textbf{Output:} Prediction of $\pi_v$ for each $v\in \{v_1, \dots, v_{N_1}\}$.

\FOR{$b = 1$ to $B$}
    \STATE Sample $\{\tilde{v}_1^{(b)}, \dots, \tilde{v}_{N-N_1}^{(b)}\}$ from $\{\tilde{v}_1, \dots, \tilde{v}_{N_0}\}$ without replacement.
    \STATE Determine the edge set $E^{(b)}$ in $V^{(b)}=\{v_1, \dots, v_{N_1}, \tilde{v}_1^{(b)}, \dots, \tilde{v}_{N-N_1}^{(b)}\}$.
    \STATE Based on the graph $(V^{(b)},E^{(b)})$, solve \eqref{eq::fair-0} to obtain the selection probability of $\pi_v^{(b)}$ for each $v\in \{v_1, \dots, v_{N_1}\}$.
\ENDFOR

\STATE \textbf{Return}: mean and quantiles of $\{\pi_v^{(1)},\cdots ,\pi_v^{(B)}\}$ as the mean prediction and interval prediction of $\pi_v$, for each $v\in \{v_1, \dots, v_{N_1}\}$. 

\end{algorithmic}
\end{algorithm}

In practice, the computational cost can be high if we directly enumerate all cycles or relevant subsets within the simulated pool $\{v_1,\cdots,v_{N_1},\tilde{v}_1^{(b)},\cdots,\tilde{v}_{N-N_1}^{(b)}\}$ for each of the $B$ replications (e.g., $B=1000$). When the number of protected-feature groups $M$ is small, it is more efficient to first enumerate the feasible cycles or subsets once using the entire combined vertex set $\{v_1,\cdots,v_{N_1},\tilde{v}_1,\cdots,\tilde{v}_{N_0}\}$, and then, for each replication $b$, retain only those cycles whose vertices appear in the sampled subset $\{v_1,\cdots,v_{N_1},\tilde{v}_1^{(b)},\cdots,\tilde{v}_{N-N_1}^{(b)}\}$. This strategy avoids repeated enumeration while preserving correctness. When $N_0$ is extremely large, however, full enumeration over all historical vertices may still be infeasible. In such cases, the historical pool $\{\tilde{v}_1,\cdots,\tilde{v}_{N_0}\}$ can be partitioned into several disjoint subsets of manageable size, and the enumeration of feasible cycles or relevant subsets can be performed separately within each subset. The results are then combined across subsets as needed.

\section{Price of our new fairness criterion under random graph models}\label{sec::theory}

The \textit{price of our new fairness criterion} quantifies the efficiency loss incurred when enforcing fairness constraints compared to the unconstrained optimal allocation. Formally, it is defined as the relative reduction in the total utility (e.g., expected number of transplants or any general utility) achieved under our fairness-constrained optimization versus the utility achieved by the unconstrained solution. This concept follows the standard ``price of fairness'' framework in the literature \citep[e.g.,][]{bertsimas2011price}, here adapted specifically to our new fairness criterion. In this section, we establish theoretical guarantees for the price of our fairness criterion using a random graph model that incorporates ABO blood types. We describe the model assumptions below.

Recall that a donor and a patient are compatible if they match in both blood type and HLA. We assume blood type compatibility follows standard medical guidelines: AB patients can receive kidneys from donors of any blood type, A and B patients can receive from donors of their own type or type O, while O patients can only receive from type O donors. We assume HLA compatibility follows binomial distributions. Specifically, we randomly assign each patient a PRA score, representing the probability of being HLA incompatible with any donor. We assume the PRA scores can take discrete values $\{r_1, \dots, r_M\}$, which also determine sensitization levels $\{r_1, \dots, r_M\}$; and we assume HLA compatibility between different donor-patient pairs is independent. The vertex set $V$ is formed by independently drawing donor-patient pairs from an underlying population, keeping only incompatible pairs until a total of $N$ incompatible pairs is reached. Let $\mu_{b_1,b_2,r,a}$ denote the probability of sampling an incompatible donor-patient pair with donor blood type $b_1$, patient blood type $b_2$, patient sensitization level $r$, and patient protected group level $a$.  Here, the parameters $\{\mu_{b_1,b_2,r,a}\}$ are fixed but unknown. The edge set $E$ is determined by the compatibility between donor-patient pairs in $V$. Regarding the utility assignment, we assume $u(c)$ only depends on the induced subgraph of vertices in $c$, independent of the protected feature of vertices in $c$. We also assume $u(c)/|c|$ is bounded by a constant. 

In our random graph model, the PRA levels $\{r_1, \ldots, r_M\}$ are treated as fixed parameters representing different sensitization categories rather than random draws from a continuous distribution. Specifically, each vertex $v_i$ (representing a donor-patient pair) is assigned a PRA level $r_i$ from this finite set according to empirical proportions observed in real KPD data \citep[e.g.,][]{saidman2006increasing}. Hence, the PRA levels are not sampled uniformly over $[0,1]$ but follow a categorical distribution reflecting realistic sensitization patterns.

In our theoretical analysis, we adopt a simplified random graph model where each edge is independently present with a constant probability $p \in (0,1)$.
This assumption is primarily technical and facilitates the use of classical random graph results \citep[e.g.,][]{erdHos1968random} to establish asymptotic matching properties.
In real kidney exchange programs, however, the probability that an edge exists between two vertices depends on medical compatibility, particularly blood type and PRA sensitization level, so that the effective edge probabilities are heterogeneous across vertex pairs.
Nevertheless, the assumption of a constant $p$ captures the key structural property required in our proofs: the existence of a sufficiently dense graph where each vertex has an expected number of compatible partners growing linearly in $N$.
Under any model where the minimal edge probability remains bounded away from zero across vertex types (as is empirically the case in our simulations), the asymptotic results and efficiency bounds derived here continue to hold qualitatively.
Therefore, while the assumption of a constant $p$ simplifies the theoretical exposition, it does not materially affect the validity of the conclusions regarding the asymptotic price of fairness.

Similar to group fairness that prioritizes highly-sensitized patients, our new fairness criterion needs to prioritize some subgroups of patients based on the protected feature, and further balance their average selection probabilities. In Section \ref{subsec::general}, we derive a general result quantifying the efficiency loss due to subgroup prioritization in KPD programs. Specifically, we focus on prioritizing patients with either $A=1$ or $A=0$, given their blood type and sensitization level. In Section \ref{subsec::application}, we apply the result to establish theoretical guarantees for the price of fairness. 

\subsection{Efficiency loss of subgroup prioritization}\label{subsec::general}

Recall the definition of $V_{ij}$ in Section \ref{subsec::fairness_definition}. We further write $V_{ij}$ as the union 
$$V_{ij}=\cup_{b_1,b_2\in\{O,A,B,AB\}}\{V_{b_1,b_2,i,j}\},$$
where $V_{b_1,b_2,i,j}$ is the subset of vertices in $V_{ij}$ with donor blood type $b_1$ and patient blood type $b_2$. For random graph $G$, consider again the optimization problem in \ref{eq::cycle}: 
\begin{equation}
\label{eq::theory-no-fairness}
\max_{x_c\in\{0,1\}}\quad \sum_{c\in \mathcal{C}}x_cu(c) \quad \text{subject to}\quad \sum_{c\in\mathcal{C}}x_c1_{v\in c}\leq 1,\ \forall\ v\in V.
\end{equation}
But here, we allow for a general utility $u$ and a set of cycles or relevant subsets $\mathcal{C}$ with some length limits. The term ``general utility'' refers to a flexible specification of the objective beyond the traditional ``number of transplants'' criterion. Specifically, $u(c)$ can represent any valid utility function that incorporates additional clinical or policy considerations, such as predicted graft survival, donor-recipient age matching, or sensitization adjustments, while still assigning a scalar utility value to each feasible exchange cycle $c$. Regardless of how $u(c)$ is defined, all utilities enter the optimization problem through binary decision variables, so the resulting formulation remains an integer programming problem.

Let $\mathcal{P}$ denote the set of indices $(b_1,b_2,i,j)$, where the subgroup $V_{b_1,b_2,i,j}$ should be prioritized over $V_{b_1,b_2,1-i,j}$. For fixed $\epsilon\geq 0$, we consider the optimization problem under the constraint of prioritizing these subgroups in $\mathcal{P}$: 
\begin{align}
&\max_{x_c\in\{0,1\}}\quad \sum_{c\in \mathcal{C}}x_cu(c)\label{eq::theory-priority-obj}\\
&\begin{aligned}\label{eq::theory-priority-con}
\text{subject to} &\quad \sum_{c\in\mathcal{C}}x_c1_{v\in c}\leq 1,\ \forall\ v\in V,\\
&\sum_{v\in V_{b_1,b_2,i,j}}\sum_{c\in\mathcal{C}}x_c1_{v\in c}-|V_{b_1,b_2,i,j}|\leq \epsilon N\quad \mathrm{or} \quad \sum_{v\in V_{b_1,b_2,1-i,j}}\sum_{c\in\mathcal{C}}x_c1_{v\in c}\leq \epsilon N,\ \forall\ (b_1,b_2,i,j)\in \mathcal{P}.
\end{aligned}
\end{align}
The problems \eqref{eq::theory-no-fairness} and \eqref{eq::theory-priority-obj}--\eqref{eq::theory-priority-con} share the same objective function. The constraint \eqref{eq::theory-priority-con} means that at most $\epsilon N$ in $V_{b_1,b_2,1-i,j}$ can be matched or at most $\epsilon N$ patients in $V_{b_1,b_2,i,j}$ can be unmatched, if the subgroup $V_{b_1,b_2,i,j}$ is prioritized over $V_{b_1,b_2,1-i,j}$. We refer to \eqref{eq::theory-no-fairness} as the unconstrained problem, aside from the vertex-disjointness condition, and to \eqref{eq::theory-priority-obj}--\eqref{eq::theory-priority-con} as the constrained problem.

The following Proposition \ref{prop2} is the main result of this subsection.

\begin{proposition}\label{prop2}
Consider the random graph model in Section \ref{sec::theory}, and fix any sequence $\varepsilon_N \downarrow 0$.
For each $N$, let $\mathrm{Opt}(G_N)$ be the optimum of the unconstrained problem \eqref{eq::theory-no-fairness}, and let $\mathrm{Opt}^{\mathrm{prio}}_{\varepsilon_N}(G_N)$ be the optimum of the constrained problem \eqref{eq::theory-priority-obj}--\eqref{eq::theory-priority-con} with $\epsilon=\epsilon_N$. 
Then, with probability one as $N\to\infty$,
\[
\mathrm{Opt}(G_N) - \mathrm{Opt}^{\mathrm{prio}}_{\varepsilon_N}(G_N) = o(N).
\]
\end{proposition}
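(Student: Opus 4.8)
The plan is to show that the priority constraints in \eqref{eq::theory-priority-con} can be satisfied at a cost of only $o(N)$ lost utility, by exhibiting a near-optimal feasible solution for the constrained problem built from an optimal solution of the unconstrained problem. First I would invoke the classical random graph theory: under the constant-$p$ Erd\H{o}s--R\'enyi model, with probability one as $N\to\infty$, the graph $G_N$ has a near-perfect packing by short cycles (e.g.\ almost all vertices lie in vertex-disjoint $2$-cycles), so in particular $\mathrm{Opt}(G_N) = \Theta(N)$ under the boundedness assumption $u(c)/|c|\le \mathrm{const}$; more importantly, any constant fraction of the vertices can be covered by disjoint short cycles. The key structural fact I would establish is a ``repair lemma'': given any exchange plan $P$ and any target subgroup $V_{b_1,b_2,i,j}$ that we wish to prioritize, we can modify $P$ into a plan $P'$ that (a) covers all but at most $\epsilon_N N$ vertices of $V_{b_1,b_2,i,j}$ (or covers at most $\epsilon_N N$ vertices of the deprioritized $V_{b_1,b_2,1-i,j}$), (b) changes the set of matched vertices by only $o(N)$ vertices, and hence (c) changes the total utility by only $o(N)$, using again that $u(c)/|c|$ is bounded so utility is controlled by the number of affected vertices.

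The main step is the repair argument. Starting from an optimal plan for \eqref{eq::theory-no-fairness}, I would process the finitely many indices $(b_1,b_2,i,j)\in\mathcal{P}$ one at a time. For a fixed index: if fewer than $\epsilon_N N$ vertices of the deprioritized group $V_{b_1,b_2,1-i,j}$ are currently matched, the constraint already holds and we do nothing. Otherwise, let $S$ be the set of currently unmatched vertices of the prioritized group $V_{b_1,b_2,i,j}$; if $|S|\le \epsilon_N N$ we are also done, so assume $|S| > \epsilon_N N$. Here I would use the random-graph density: with probability one, the subgraph induced on $S$ together with the compatible deprioritized vertices is dense enough that almost all of $S$ can be absorbed into new disjoint short cycles — either among $S$ itself or by rerouting: for each $s\in S$ find a short cycle through $s$ using other currently-matched vertices, delete those vertices' old cycles, and repack the freed vertices greedily. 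Because blood-type compatibility within a fixed $(b_1,b_2)$ class is automatic (the edge restrictions are purely blood-type and i.i.d.\ HLA, which are present with probability $\ge p$), and because the random graph is asymptotically almost-complete on any linear-sized vertex subset, a greedy cycle-packing covers all but $o(N)$ of the vertices involved; the total number of vertices whose matched/unmatched status changes is $o(N)$, whence the utility change is $o(N)$. Iterating over the $O(1)$ indices in $\mathcal{P}$ accumulates only $o(N)$ total loss, and the resulting plan is feasible for \eqref{eq::theory-priority-obj}--\eqref{eq::theory-priority-con}; therefore $\mathrm{Opt}^{\mathrm{prio}}_{\varepsilon_N}(G_N) \ge \mathrm{Opt}(G_N) - o(N)$. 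The reverse inequality $\mathrm{Opt}^{\mathrm{prio}}_{\varepsilon_N}(G_N) \le \mathrm{Opt}(G_N)$ is immediate since the constrained problem only adds constraints.

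The hard part will be making the ``repair/repack'' step rigorous while keeping the affected-vertex count $o(N)$ \emph{simultaneously} for a sequence $\epsilon_N\downarrow 0$ that may go to zero arbitrarily slowly. The subtlety is that forcing an additional $\Theta(\epsilon_N N)$ prioritized vertices into the matching may, in the worst case, displace a comparable number of other vertices, and one must argue that the displaced vertices can be re-absorbed with only $o(N)$ net loss — this is where the almost-completeness of $G_N$ on linear vertex sets (a consequence of $p$ being a constant, via standard Erd\H{o}s--R\'enyi concentration and a Chernoff/union bound over vertex types) is essential, together with the fact that there are only finitely many blood-type $\times$ sensitization $\times$ protected-group cells so the ``or'' in the constraint can always be satisfied by whichever side is cheaper. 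A clean way to organize this is to prove the repair lemma once for a generic prioritized-versus-deprioritized pair and then apply it $|\mathcal{P}|=O(1)$ times, tracking a cumulative $o(N)$ error; care is needed so that repairs for later indices do not undo earlier ones, which I would handle by noting each repair only adds short cycles among vertices sharing the already-processed blood type and leaves other cells' coverage essentially intact up to $o(N)$.
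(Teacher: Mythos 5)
There is a genuine gap in your central ``repair'' step. You claim that almost all of the unmatched prioritized set $S$ can be absorbed into new disjoint short cycles while the matched/unmatched status of only $o(N)$ vertices changes, and you support this with the assertion that ``blood-type compatibility within a fixed $(b_1,b_2)$ class is automatic'' and that $G_N$ is ``asymptotically almost-complete on any linear-sized vertex subset.'' In the model that Propositions \ref{cor1} and \ref{prop3} actually rely on, edges must be blood-type admissible and only the HLA part is Bernoulli with probability bounded below; within a class with donor type $b_1$ and patient type $b_2$, internal two-cycles require $b_1\to b_2$ blood compatibility, which fails exactly for the under-demanded classes (e.g.\ donor $A$, patient $O$) that carry positive mass $\mu_{b_1,b_2,r,a}$. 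For such a class, nearly saturating the prioritized subgroup is simply infeasible without displacing $\Theta(N)$ matched vertices from scarce over-demanded classes, and no greedy repacking can recover that utility, so the claimed $o(N)$ accounting breaks down. Even under a literal ``every edge with probability $p$'' reading, your accounting is unjustified for general utilities: deleting the partners' old cycles and ``repacking greedily'' is not shown to recover their utility to within $o(N)$, since utility depends on the type composition of cycles, not merely on how many vertices end up covered. The constraint \eqref{eq::theory-priority-con} contains an ``or'' precisely because saturation can be impossible; the cheap option in that case is to \emph{exclude} the deprioritized subgroup by substituting, inside the same coarse type $(b_1,b_2,r)$, unmatched prioritized vertices for matched deprioritized ones—legitimate only because $u(c)$ does not depend on the protected feature. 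You mention ``whichever side is cheaper'' in passing, but you never develop this substitution argument or establish its graph-theoretic feasibility, and that is exactly the missing idea.

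For comparison, the paper does not repair a fixed optimal plan at all. It classifies cycles into finitely many templates by coarse type and edge pattern, shows $\mathrm{Opt}(G_N)=\rho N+o(N)$ for a coarse-type linear program that is blind to $A$, then chooses the split of each coarse type's usage between $A=0$ and $A=1$ by a saturate-or-exclude rule (which satisfies \eqref{eq::theory-priority-con} by construction), and finally realizes the prescribed fine-type counts in $G_N$ via the multipartite matching Lemma \ref{lem2}. If you want to salvage a repair-style proof, you would need at minimum (i) a swap lemma, i.e.\ a bipartite matching between matched deprioritized and unmatched prioritized vertices of the same coarse type whose substitutions preserve the required edge pattern (probability at least $p^{O(1)}$ per candidate, then a Hall-type argument on linear-sized parts), and (ii) a utility bookkeeping under general $u$ that does not reduce to counting affected vertices when cycles are deleted; neither appears in your proposal. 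A further, more minor point: the almost-sure conclusion requires failure probabilities summable in $N$ (Borel--Cantelli), which your ``with probability one'' appeals to classical near-perfect packings do not by themselves deliver.
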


Proposition \ref{prop2} shows that we can prioritize patients with either $A=1$ or $A=0$, given their blood type and sensitization level, with ignorable relative efficiency loss when the random graph is large. The result is useful to derive upper bounds for the price of our new fairness criterion, defined as the relative overall utility loss due to the fairness constraint in \eqref{eq::fair-0}. We present these results in the next subsection.

\subsection{Upper bounds for price of our new fairness criterion}\label{subsec::application}

\subsubsection{Optimizing some general utilities}

First, we apply Proposition \ref{prop2} to the scenario with general utilities, allowing for potential recourse strategies. We can obtain a crude upper bound for the price of fairness in Proposition \ref{cor1} below. 

\begin{proposition}\label{cor1}
The price of fairness due to the fairness constraint in \eqref{eq::fair-0} is no greater than
$$\max_{b_1,b_2,r}\max \left\{\frac{\mu_{b_1,b_2,r,1}\overline{\mu}_{r,0}-\mu_{b_1,b_2,r,0}\overline{\mu}_{r,1}}{(\mu_{b_1,b_2,r,1}+\mu_{b_1,b_2,r,0})\overline{\mu}_{r,0}}, \frac{\mu_{b_1,b_2,r,0}\overline{\mu}_{r,1}-\mu_{b_1,b_2,r,1}\overline{\mu}_{r,0}}{(\mu_{b_1,b_2,r,1}+\mu_{b_1,b_2,r,0})\overline{\mu}_{r,1}}\right\}$$
almost surely as $N\to \infty$, where $\overline{\mu}_{r,1}=\sum_{b_1,b_2} \mu_{b_1,b_2,r,1}$ and $\overline{\mu}_{r,0}=\sum_{b_1,b_2} \mu_{b_1,b_2,r,0}$. 
\end{proposition}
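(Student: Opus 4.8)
The plan is to reduce Proposition \ref{cor1} to a direct application of Proposition \ref{prop2}. The first step is to translate the fairness constraint in \eqref{eq::fair-0} into the language of subgroup prioritization used in Section \ref{subsec::general}. Fix a sensitization level $r$ and consider the two groups $V_{0,r}$ and $V_{1,r}$ of sizes concentrating around $\overline{\mu}_{r,0}N$ and $\overline{\mu}_{r,1}N$ respectively, by the law of large numbers applied to the i.i.d.\ sampling of incompatible pairs. In the unconstrained problem \eqref{eq::theory-no-fairness}, a matching result from classical random graph theory (as invoked before Proposition \ref{prop2}) implies that almost every vertex is matched, so the average selection rates in both $V_{0,r}$ and $V_{1,r}$ tend to $1$, and hence the fairness constraint is asymptotically satisfied by the unconstrained optimum --- \emph{except} that finitely many ``hard'' blood-type cells may fail to be fully matchable. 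The key observation is that for the purpose of an \emph{upper bound} on the price of fairness, it suffices to exhibit one feasible solution to \eqref{eq::fair-0}: take the unconstrained optimum and, within each sensitization level $r$ and each blood-type cell $(b_1,b_2)$, drop just enough matched vertices from whichever of $V_{b_1,b_2,0,r}$, $V_{b_1,b_2,1,r}$ has the higher selection rate, so that the per-cell rates are equalized, which forces the aggregate rates in $V_{0,r}$ and $V_{1,r}$ to be equal and hence the constraint $l_j$ to be met for any $l_j\ge 0$.

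The second step is the arithmetic that produces the stated bound. When we equalize rates within cell $(b_1,b_2,r)$ by deleting from the over-matched side, the fraction of \emph{all} vertices that we lose is, in the worst case over which side is binding, exactly one of the two ratios appearing in the $\max$: the numerator $\mu_{b_1,b_2,r,1}\overline{\mu}_{r,0}-\mu_{b_1,b_2,r,0}\overline{\mu}_{r,1}$ measures the mismatch between the cell's share within group $1$ versus group $0$, and dividing by $(\mu_{b_1,b_2,r,1}+\mu_{b_1,b_2,r,0})\overline{\mu}_{r,0}$ normalizes this to a fraction of the utility. Since $u(c)/|c|$ is bounded and each deleted vertex removes at most a bounded amount of utility, the utility loss is at most a constant times the number of deleted vertices, which is $o(N)$ plus the leading-order term just described. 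Taking the worst cell $(b_1,b_2,r)$ and the worst side gives the double $\max$ in the statement; the $o(N)$ slack and the almost-sure qualifier come directly from Proposition \ref{prop2} applied with $\varepsilon_N\downarrow 0$ to certify that ``prioritizing the over-matched side up to $\varepsilon_N N$ slack'' costs only $o(N)$ beyond the explicit term.

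The third step is to make the reduction to Proposition \ref{prop2} precise rather than heuristic. The constrained problem \eqref{eq::theory-priority-obj}--\eqref{eq::theory-priority-con} allows, for each prioritized cell, either ``at most $\varepsilon_N N$ unmatched on the prioritized side'' or ``at most $\varepsilon_N N$ matched on the deprioritized side''; by choosing $\mathcal{P}$ to prioritize, within each $(b_1,b_2,r)$, the \emph{smaller} of the two groups (the one whose full matching is the binding requirement) and letting the other group absorb the deletion, a feasible point of that constrained problem, after the per-cell rate equalization described above, becomes a feasible point of \eqref{eq::fair-0}. Proposition \ref{prop2} guarantees $\mathrm{Opt}(G_N)-\mathrm{Opt}^{\mathrm{prio}}_{\varepsilon_N}(G_N)=o(N)$ a.s.; subtracting the additional explicit deletions needed to go from ``prioritized'' to ``rate-equalized'' yields the claimed bound. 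I expect the main obstacle to be bookkeeping at the boundary between the two regimes: one must check that after deleting vertices to equalize the per-cell rates, the deletions do not themselves break vertex-disjointness or require re-optimization that changes the leading constant, and that the concentration $|V_{b_1,b_2,i,r}|/N\to\mu_{b_1,b_2,i,r}$ is uniform enough over the finitely many cells to pass to the $\max$ almost surely --- both are routine but need care to state cleanly. A secondary subtlety is confirming that the unconstrained optimum can indeed be taken to match almost all vertices in the ``easy'' cells (so that the only loss is the explicit per-cell rebalancing), which relies on the dense-random-graph matching input already cited in the setup.
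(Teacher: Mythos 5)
Your overall strategy---exhibit one feasible point of \eqref{eq::fair-0} obtained by trimming the unconstrained optimum, and invoke Proposition \ref{prop2} to certify that the prioritized/trimmed allocation can be implemented in $G_N$ at only $o(N)$ extra cost---is the same skeleton as the paper's proof. But the core of your construction has a genuine error: you equalize the selection rates \emph{within each blood-type cell} $(b_1,b_2,r)$ and claim that this ``forces the aggregate rates in $V_{0,r}$ and $V_{1,r}$ to be equal.'' That implication is false in general. If the within-cell rates are all equal to some $q_{b_1,b_2}$, the aggregate rate for group $a$ at level $r$ is $\sum_{b_1,b_2} q_{b_1,b_2}\,\mu_{b_1,b_2,r,a}/\overline{\mu}_{r,a}$, and these agree across $a=0,1$ only when the blood-type composition at level $r$ is the same in both protected groups---exactly the balanced case in which the stated bound is $0$. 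This is a Simpson-type aggregation failure, and it is precisely why the paper's trimming is different: within each cell the paper forces the \emph{matched masses} of the two groups into the global proportion $\overline{\mu}_{r,1}:\overline{\mu}_{r,0}$ (equivalently, it equalizes the group-level selection probabilities at each $r$), which is what makes $\overline{\mu}_{r,0}$ and $\overline{\mu}_{r,1}$ appear in the numerators and denominators of the bound. Under your per-cell equalization the deleted mass is of the form $(p_1-p_0)\mu_{b_1,b_2,r,\cdot}$ and does not reproduce the stated ratios; your Step 2 arithmetic asserts the target expression rather than deriving it from your own deletion rule.

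Two further points. First, your premise that in the unconstrained optimum ``almost every vertex is matched'' (so the fairness constraint is asymptotically satisfied already) is unjustified and generally false in this model: with ABO structure and a general utility, under-demanded types remain unmatched in positive proportion. The paper's proof never needs such a claim---it writes the matched fraction of each cell as $p(\mu_{b_1,b_2,r,1}+\mu_{b_1,b_2,r,0})$ for an arbitrary $p\in[0,1]$ and bounds the worst-case local loss. Second, ``dropping matched vertices'' from an integer optimum is not merely bookkeeping: removing one vertex from a $2$- or $3$-cycle invalidates the other matches in that cycle, so the naive deletion count understates the loss by a constant factor. The paper sidesteps this by doing all trimming at the type/proportion level and then using the prioritization constraints of \eqref{eq::theory-priority-obj}--\eqref{eq::theory-priority-con} together with Proposition \ref{prop2} to realize the trimmed pattern in $G_N$ up to $o(N)$; relatedly, your rule for choosing $\mathcal{P}$ (``prioritize the smaller group'') does not match the quantity that actually matters, namely which group is over-represented in the cell relative to the global ratio $\overline{\mu}_{r,1}/\overline{\mu}_{r,0}$. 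To repair the proof, replace per-cell rate equalization with per-cell trimming to the global ratio, redo the loss computation (which then yields exactly the two displayed ratios), and let Proposition \ref{prop2} handle implementability as you intended.
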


If the blood type distributions are balanced across all subgroups defined by different levels of $A$ and $R$, i.e. $\mu_{b_1,b_2,r,0}/\overline{\mu}_{r,0}=\mu_{b_1,b_2,r,1}/\overline{\mu}_{r,1}$ for all $b_1,b_2,r$, then the upper bound in Proposition \ref{cor1} is 0. If the blood type distributions are not balanced within a specific subgroup level $A=a$ and $R=r$, and the optimal solution only matches patients in this subgroup, then the upper bound in Proposition \ref{cor1} is attainable.

\subsubsection{Maximizing the number of transplants without recourse strategies}

Then, we apply Proposition \ref{prop2} to the scenario that maximizes the number of transplants without any recourse strategies. In this scenario, an explicit optimal allocation is explicitly available in \cite{ashlagi2011individual}. 

We introduce the following assumptions to simplify the decomposition of the four-way probability $\mu_{b_1,b_2,r,a}$. First, we assume that the patient and donor in each incompatible pair share the same protected feature level. Second, we assume that the distributions of sensitization levels are consistent across protected feature levels. While these assumptions are not strictly necessary, they facilitate the decomposition of $\mu_{b_1,b_2,r,a}$ into more manageable terms.

Specifically, let $\mu_a$ represent the frequency probability of the protected feature level $A = a$. Define $\mu_{O\mid a}$, $\mu_{A\mid a}$, $\mu_{B\mid a}$, and $\mu_{AB\mid a}$ as the frequency probabilities of blood types O, A, B, and AB, respectively, within the protected feature level $A = a$. Under these assumptions, there exists a constant $c$ such that $\mu_{b_1,b_2,r,a}=cr \mu_a\mu_{b_1\mid a}\mu_{b_2\mid a}$ for all $b_1, b_2 \in \{O, A, B, AB\}$, $r \in \{r_1, \dots, r_M\}$, and $a \in \{0, 1\}$.

Moreover, let $\overline{\mu}_O,\overline{\mu}_A,\overline{\mu}_B,\overline{\mu}_{AB}$ denote the frequency probability of blood types O, A, B, AB, respectively, among the whole population. Let $\overline{\gamma}$ be the average PRA score among the whole population. We can obtain a more precise upper bound for the price of fairness in Proposition \ref{prop3} below. 

\begin{proposition}\label{prop3}

Assume $1.5\overline{\mu}_A>\overline{\mu}_{O}>\overline{\mu}_{A}>\overline{\mu}_{B}>\overline{\mu}_{AB}$ and $\overline{\gamma}<0.4$. Let $\phi_{b_1,b_2}=\sum_{k=0}^1\mu_k\overline{\gamma}\mu_{b_1\mid k}\mu_{b_2\mid k}$, for $b_1,b_2\in\{O,A,B,AB\}$. Let
\begin{equation*}
\begin{split}
T_a(r)=&\mu_ar(\mu_{O\mid a}+\mu_{AB\mid a}-\mu_{O\mid a}\mu_{AB\mid a}+\mu_{A\mid a}^2+\mu_{B\mid a}^2)+2\mu_a\mu_{A\mid a}\mu_{B\mid a},\\
S_a(r)=&T_a(r)+\mu_a\{\mu_{O\mid a}(1-\mu_{O\mid a})+\mu_{A\mid a}\mu_{AB\mid a}+\mu_{B\mid a}\mu_{AB\mid a}\},\\
Q(r)=&T_1(r)+T_0(r)+\phi_{B,AB}+\phi_{O,AB}+\phi_{A,AB}+\phi_{A,O}+\phi_{AB,O}+\phi_{O,B},\\
\end{split}
\end{equation*}
for $a=0,1$ and $r\in\{r_1,\cdots ,r_M\}$, and 
\begin{equation*}
\begin{split}
R_a=&\min\left\{\phi_{B,AB},2\mu_a\mu_{B\mid a}\mu_{AB\mid a}-\phi_{B,AB}\right\}+\min\left\{\phi_{O,AB}+\phi_{A,AB},2\mu_a\mu_{A\mid a}\mu_{AB\mid a}-\phi_{O,AB}-\phi_{A,AB}\right\}\\
&+\min\left\{\phi_{A,O}+\phi_{AB,O},2\mu_a\mu_{O\mid a}\mu_{A\mid a}-\phi_{A,O}-\phi_{AB,O}\right\}+\min\left\{\phi_{O,B},2\mu_a\mu_{O\mid a}\mu_{B\mid a}-\phi_{O,B}\right\},\\
\end{split}
\end{equation*}
for $a=0,1$. Then, the price of fairness due to the fairness constraint in \eqref{eq::fair-0} is no greater than
$$\max_{r\in\{r_1,\cdots ,r_M\}}\max\left\{\frac{S_1T_0-S_0T_1-S_0R_1}{S_1Q}, \frac{S_0T_1-S_1T_0-S_1R_0}{S_0Q},0\right\}$$
almost surely as $N\to\infty$. 
\end{proposition}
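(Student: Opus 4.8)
The plan is to prove the inequality \emph{constructively}, since the price of fairness equals $1-\mathrm{Opt}^{\mathrm{fair}}(G_N)/\mathrm{Opt}(G_N)$ and what we want is an \emph{upper} bound on it; equivalently, we need a \emph{lower} bound on $\mathrm{Opt}^{\mathrm{fair}}(G_N)$, i.e.\ a feasible randomized policy meeting the fairness constraint in \eqref{eq::fair-0} whose expected utility is at least $\mathrm{Opt}(G_N)-L_N\,N-o(N)$ almost surely, with $L_N$ tending to the stated maximum. Two reductions set this up. First, since $l_j=\Theta(1/N)$ while $|V_{ij}|=\Theta(N)$ and each $\pi_v\in[0,1]$, the constraint at level $j$ is, to leading order, the exact equality of the average selection rate in $V_{0j}$ and in $V_{1j}$; it therefore suffices to equalize these two rates up to $o(1)$. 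Second, because the decomposition assumption makes the sensitization level conditionally independent of blood type and of the protected feature, each subgroup $V_{ij}$ is (up to scaling) a level-independent mixture of the blood-type classes, so the matching and the fairness-induced loss can be analyzed separately at each level $r_j$ and then combined; this separation is what ultimately produces the outer $\max_r$. Throughout, Proposition~\ref{prop2} is the enabling tool: it lets us ``prioritize'' any chosen collection of blood-type$\,\times\,$protected subgroups at an $o(N)$ cost, so all of the combinatorial surgery below only has to be described to leading order and then legitimized by invoking Proposition~\ref{prop2}.

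Next I would import the explicit optimal allocation of \cite{ashlagi2011individual}. Classifying ordered donor--patient pairs by blood type into overdemanded, underdemanded, self-demanded and reciprocally-demanded classes, and using the hypotheses $1.5\overline{\mu}_A>\overline{\mu}_O>\overline{\mu}_A>\overline{\mu}_B>\overline{\mu}_{AB}$ and $\overline{\gamma}<0.4$ to place the large random graph in the regime where that description applies, the maximum-cardinality packing of $2$- and $3$-cycles matches, to leading order, every blood-compatible (hence tissue-incompatible) pair, the reciprocally-demanded pairs against each other, and exactly as many underdemanded pairs as the scarce supply of O-type (and other overdemanded) donors allows. This identifies the normalized optimum: $\mathrm{Opt}(G_N)/N\to Q$ almost surely, where $Q$ collects the ``easy'' masses $T_0(r)+T_1(r)$ and the recycling contributions captured by the $\phi$ terms, and it identifies the asymptotic match rate inside each subgroup $V_{ij}$ as $T_a(r_j)/S_a(r_j)+o(1)$, with $S_a(r_j)$ the normalized size of $V_{ij}$ and $T_a(r_j)$ the mass of its pairs matched for free by easy cycles. (One also checks that $S_a-T_a$ is precisely the mass of the underdemanded classes, which is where the terms $\mu_a\mu_{O\mid a}(1-\mu_{O\mid a})+\mu_a\mu_{A\mid a}\mu_{AB\mid a}+\mu_a\mu_{B\mid a}\mu_{AB\mid a}$ come from.)

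With the unconstrained structure in hand, fix a level $r=r_j$ and assume without loss of generality $T_0(r)/S_0(r)\ge T_1(r)/S_1(r)$. I would equalize the two rates by a single move that simultaneously lowers the rate in $V_{0j}$ and raises it in $V_{1j}$: release easy cycles incident to $V_{0j}$, and whenever the freed donor is of an overdemanded type reinsert it into a currently-unmatched underdemanded pair of $V_{1j}$. Reaching the common rate requires releasing about $\bigl(S_1(r)T_0(r)-S_0(r)T_1(r)\bigr)N/S_1(r)+o(N)$ pairs of $V_{0j}$, each worth at most one transplant; against this, the reinsertion step recovers to leading order a number of transplants proportional to $S_0(r)R_1/S_1(r)$, where $R_1$ is the class-by-class minimum of the supply of reusable overdemanded donors freed from level-$1$ surgery and the demand of underdemanded pairs in $V_{1j}$ --- this is exactly the four $\min\{\cdot,\cdot\}$ terms, and the fact that $R_a$ carries no $r$-argument reflects that this recycling budget is shared across levels. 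The net leading-order loss at level $r$ is thus at most $\bigl(S_1(r)T_0(r)-S_0(r)T_1(r)-S_0(r)R_1\bigr)N/S_1(r)$; symmetrizing in the two protected levels gives the second numerator, $\max\{\cdot,\cdot,0\}$ absorbs the cases where the recycling already cancels the loss (or where the rate inequality runs the other way), dividing by $\mathrm{Opt}(G_N)=QN+o(N)$ and taking $\max$ over $r$ gives the claimed bound, and the whole statement holds almost surely by the concentration underlying Proposition~\ref{prop2} and the random-graph structure theorem.

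The main obstacle will be the donor-recycling accounting that produces the $R_a$ terms. Within the fluid limit of \cite{ashlagi2011individual} one must track precisely which blood-type classes of released donors can be absorbed by which unmatched underdemanded classes --- and do so without perturbing the match rate of any \emph{other} subgroup $V_{i'j'}$, so that no other fairness constraint is broken --- and then verify that the resulting adjustment is realizable as an honest vertex-disjoint packing of $2$- and $3$-cycles rather than merely a balance of masses; the conservative $\min$'s in $R_a$ are what make this robust, but pinning down their exact form (and, relatedly, arguing that the losses at different levels combine into a single $\max_r$ rather than a sum, using the level-independence of the subgroup composition and the shared recycling budget) is the delicate step. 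A secondary technical point is transferring the structural and almost-sure results of \cite{ashlagi2011individual}, proved for their generative model, to the present constant-edge-probability model restricted to incompatible pairs; the hypotheses $1.5\overline{\mu}_A>\overline{\mu}_O$ and $\overline{\gamma}<0.4$ are used here precisely to remain in the regime where the explicit optimal structure and its concentration are valid.
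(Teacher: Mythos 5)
Your proposal follows essentially the same route as the paper's proof: it imports the Ashlagi--Roth blood-type structure of the unconstrained optimum to identify $T_a$, $S_a$, $Q$ and the per-subgroup rates $T_a/S_a$, equalizes the two protected groups' rates at each sensitization level by shifting matched mass within the partially matched (bottleneck) classes---the supply/demand $\min$ terms giving $R_a$---plus discarding, obtains the same per-level loss numerators, symmetrizes, takes $\max\{\cdot,\cdot,0\}$, divides by $Q$, maximizes over $r$, and invokes Proposition \ref{prop2} to implement the type-level surgery in $G_N$ up to $o(N)$. The only cosmetic difference is the opening reduction: the paper notes that exact equalization of rates is \emph{stricter} than \eqref{eq::fair-0}, so any loss bound for the equalized allocation transfers, rather than arguing the constraint is asymptotically equivalent to equality; since your construction targets equality anyway, the argument is the same.
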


Intuitively, the quantities in Proposition \ref{prop3} summarize how much matching potential each protected group has, and how much of that potential can be reallocated when we impose our fairness constraint. For a fixed protected group level \(A=a\) and sensitization level \(R=r\), \(T_a(r)\) is the expected \emph{fraction of incompatible pairs in group \(a\)} (at level \(r\)) that are fully matched in the optimal unconstrained allocation. It aggregates all blood-type combinations that are essentially “always used” in the optimal construction (e.g., X–X pairs, A–B/B–A pairs, and the core part of AB–O, A–O, B–O exchanges) and therefore contribute deterministically to the match count. \(S_a(r)\) is the expected \emph{total fraction of incompatible pairs in group \(a\)} (at level \(r\)) that belong to blood-type classes that can potentially participate in exchanges under the Ashlagi–Roth optimal structure. It includes both the fully matched categories counted in \(T_a(r)\) and the partially matched categories (such as B–AB, A–AB, O–A, O–B). Thus, the ratio \(T_a(r)/S_a(r)\) represents the limiting selection probability for group \(a\) at sensitization level \(r\) in the unconstrained optimum.

At the population level, \(Q(r)\) is the expected \emph{total fraction of incompatible pairs at sensitization level \(r\)} that are matched in the unconstrained optimal allocation, aggregated over both protected groups and all relevant blood-type combinations. It is the denominator in the expression for the asymptotic price of fairness, i.e., the benchmark efficiency against which we compare the fair allocation. Finally, \(R_a\) captures the \emph{additional matching mass that can be shifted} toward group \(A=a\) within the partially matched blood-type classes while preserving feasibility. Each \(\min\{\cdot,\cdot\}\) term in the definition of \(R_a\) corresponds to a specific partially matched blood-type class (B–AB, A–AB, O–A/O–AB, O–B), and ensures we do not reassign more matches to group \(a\) than either (i) the total number of matches available in that class, or (ii) the supply of compatible pairs in group \(a\). In this sense, \(R_a\) measures the “room to maneuver” for reallocating matches across protected groups without changing the blood-type structure of the exchange.

In combination, \(T_a(r)\) and \(S_a(r)\) describe the baseline matching rates for each group and sensitization level, \(Q(r)\) encodes the total unconstrained efficiency, and \(R_a\) quantifies how much we can adjust the allocation in favor of each group when enforcing our fairness constraint. The bound in Proposition \ref{prop3} then compares the best fair allocation achievable under these structural limits with the unconstrained optimum, yielding an explicit asymptotic upper bound on the price of fairness.

The condition $1.5\overline{\mu}_A > \overline{\mu}_O > \overline{\mu}_A > \overline{\mu}_B > \overline{\mu}_{AB}$ imposes a mild and empirically realistic constraint on the population-level blood type distribution, while the condition $\overline{\gamma} < 0.4$ implies that most patients are not highly sensitized. Both assumptions are standard in the kidney exchange literature and appear in \citet{ashlagi2011individual} and \citet{dickerson2014price}. Moreover, these assumptions are consistent with real U.S. blood type frequencies reported by the American Red Cross (2021), approximately 45\% type O, 40\% type A, 11\% type B, and 4\% type AB, and with observed sensitization patterns in national KPD data, where the majority of patients exhibit moderate or low PRA levels. 

As an application of Proposition \ref{prop3}, we present two illustrative examples showing that the asymptotic upper bounds on the price of fairness are very small under empirically supported distributions. In the first example, suppose there are two ethnicity groups, 80\% White American and 20\% African American, with $\mu_{O\mid 1}=0.45$, $\mu_{A\mid 1}=0.40$, $\mu_{B\mid 1}=0.11$, $\mu_{AB\mid 1}=0.04$ and $\mu_{O\mid 0}=0.51$, $\mu_{A\mid 0}=0.26$, $\mu_{B\mid 0}=0.19$, $\mu_{AB\mid 0}=0.04$. Moreover, there are three sensitization levels $\{0.05, 0.45, 0.9\}$. Then, with high probability, the price of fairness converges to zero for any $0.05 < \overline{\gamma} < 0.4$. 

In the second example, suppose there are two ethnicity groups, 90\% White American and 10\% Asian American, with $\mu_{O\mid 1}=0.45$, $\mu_{A\mid 1}=0.40$, $\mu_{B\mid 1}=0.11$, $\mu_{AB\mid 1}=0.04$ and $\mu_{O\mid 0}=0.40$, $\mu_{A\mid 0}=0.275$, $\mu_{B\mid 0}=0.255$, $\mu_{AB\mid 0}=0.07$. Moreover, there are five sensitization levels $\{0.05, 0.25, 0.45, 0.65, 0.9\}$. Then, with high probability, the price of fairness is less than $0.01$ for any $0.05 < \overline{\gamma} < 0.09$ and converges to zero for any $0.09 \leq \overline{\gamma} < 0.40$. 

These two examples demonstrate that, under empirically grounded population characteristics, the price of fairness in large kidney exchange graphs is negligible, typically below 1\%, confirming that fairness constraints can be imposed with minimal efficiency loss in real-world settings.

Under a similar random graph model, \cite{dickerson2014price} claim that the price of group fairness is no greater than $2/33$, as $n\to \infty$; \cite{ashlagi2011individual} claim that the relative efficiency loss for individual rationality is only about $1\%$ in multi-hospital kidney exchange. All these results give very low efficiency loss because, in large random graph models, there is a rich set of edges from each vertex such that we can easily adjust for the optimal solution such that the fairness constraint is satisfied. Beyond the kidney exchange setting, \cite{bertsimas2011price} provide an upper bound for the price of proportional fairness and the price of max-min fairness, which are close to 1, in general allocation problems.

\section{Numerical studies}\label{sec::numerical}

In this section, we present simulation studies based on the random graph models described in Section \ref{sec::theory} and real data from the UNOS dataset. All the optimization problems are solved by Gurobi (version 11.0) in R. 

\subsection{Simulation under the random graph models}\label{subsec::random_graph}

We first conduct a simulation study under the random graph models to evaluate the numerical performance of different fairness criteria. We consider a binary protected feature that indicates whether one is white or non-white. Specifically, we fix 80 white incompatible pairs (80\%) and 20 non-white incompatible pairs (20\%). Among the 80 white pairs, 56 patients (70\%) are lowly-sensitized, 16 patients (20\%) are moderately-sensitized, and 8 patients (10\%) are highly-sensitized; and among the 20 non-white pairs, 14 patients (70\%) are lowly-sensitized, 4 patients (20\%) are moderately-sensitized, and 2 patients (10\%) are highly-sensitized. The PRA scores are set to be 0.9, 0.45, and 0.05 for highly-sensitized, moderately-sensitized, and lowly-sensitized patients, respectively. The distribution follows the analysis in \cite{saidman2006increasing}. 

For non-white donors and patients, we simulate the blood type from a multinomial distribution (O: 51\%, A: 26\%, B: 19\%, AB: 4\%). For white donors and patients, we simulate the blood type from another multinomial distribution (O: 45\%, A: 40\%, B: 11\%, AB: 4\%). If a donor and a patient are blood-type compatible, they are incompatible with the probability of the patient's PRA score; otherwise, they are incompatible with probability 1. Given the patients, we repeat sampling the donors and dropping compatible pairs until the numbers of incompatible pairs are reached.  Similarly, based on the blood types and PRA scores, we simulate the edges of the graph. That is, we simulate the compatibility for the donor and patient of every two vertices in the graph. We only allow cycles of length at most 3, and we do not consider any recourse strategies in this section. 

We compare three fairness criteria, group fairness, individual fairness, and our newly proposed fairness criterion, using the outcome with no fairness constraints as a baseline. For our new fairness criterion, we evaluate the two parameter choices introduced in Section \ref{subsec::fairness_definition}. For group fairness, we consider two parameter-selection methods. The first selects the largest feasible value of $\alpha$ for which \eqref{eq::group} remains solvable, following \cite{dickerson2014price}. This corresponds to imposing the strongest possible constraint, requiring us to maximize the number of matched highly sensitized patients. The second method selects the largest $\alpha$ that preserves the overall utility achieved in \eqref{eq::cycle}, as in \cite{freedman2020adapting}. In other words, among all exchange plans that maximize the objective in \eqref{eq::group}, we choose the one that yields the greatest number of matched highly sensitized patients. For individual fairness, we adopt two analogous parameter-selection approaches. Specifically, we minimize the $L_1$ norm of the difference vector $\pi - \overline{\pi}$ while retaining at least 80\% and 100\% of the maximum achievable utility, respectively. These correspond to a stronger and a weaker fairness constraint.

Figure \ref{fig1} presents the average selection rates of different fairness criteria within each subgroup. From Figure \ref{fig1}, group fairness works to increase the selection rates of highly-sensitized patients, and individual fairness works to balance the selection rates of the six subgroups. Differently, our fairness balances the selection rates of the white and non-white patients within each sensitization stratum, instead of the selection rates of all six subgroups. Our stronger fairness constraint can almost eliminate the absolute difference in selection rates, while our weaker fairness constraint can also significantly reduce the absolute difference in selection rates compared to other methods. For the price of fairness, the relative utility loss is around 6.3\% for our stronger fairness constraint and 0.7\% for our weaker fairness constraint.  

\begin{figure}[h]
\caption{Simulation results under random graph models. The average selection rates within each subgroup are calculated over 100 data replications. The error bars represent the mean $\pm$ 1 standard deviation of the absolute differences in selection rates. }\label{fig1}
\centering
\includegraphics[width=0.8\textwidth]{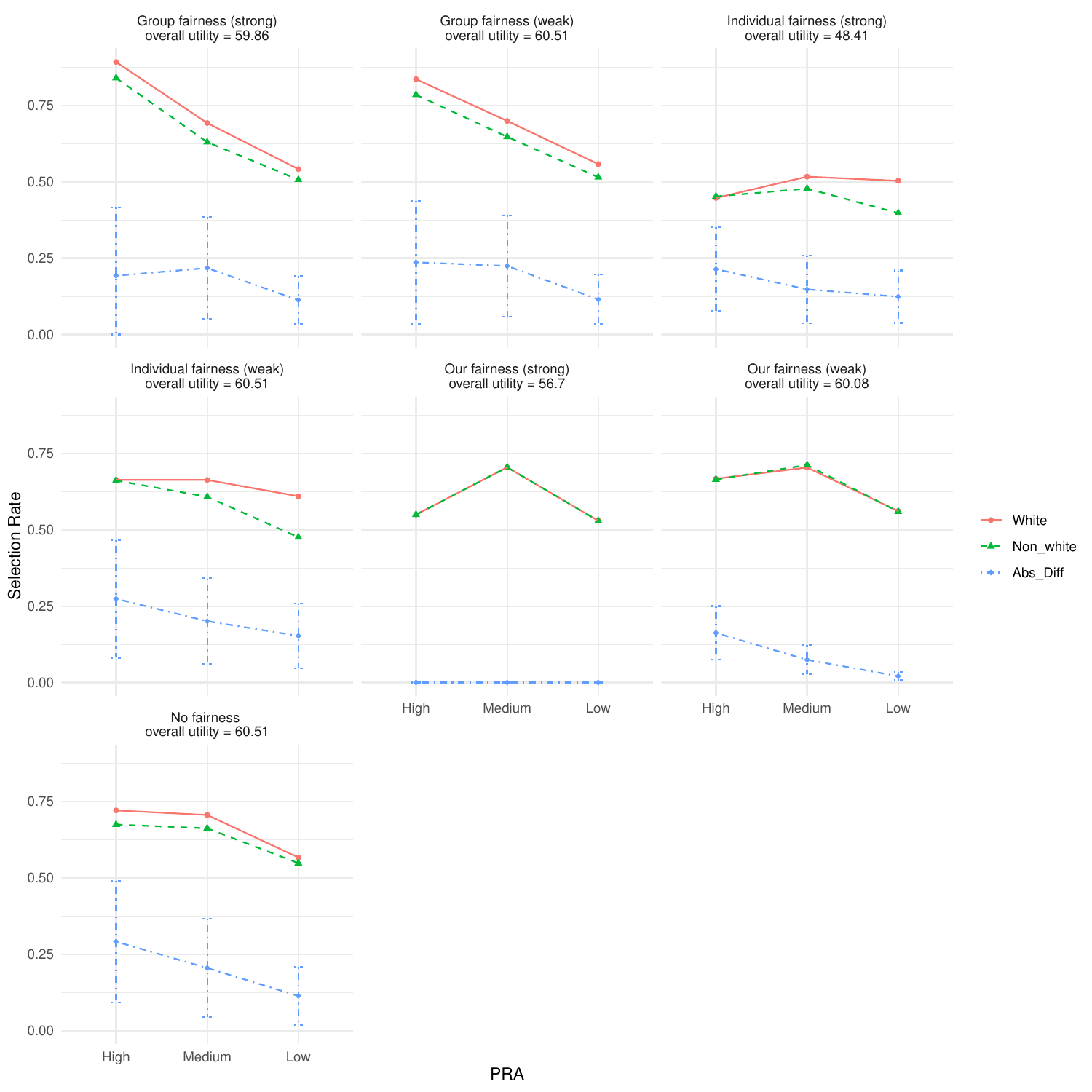}
\end{figure}

\subsection{UNOS data analysis}\label{subsec::unos}

We next conduct a simulation study based on the National UNOS STAR dataset. The National UNOS STAR dataset provides comprehensive transplant records collected by the UNOS, covering donor and recipient characteristics, allocation details, and transplant outcomes. After removing the missing values, the dataset comprises 77,073 records of transplant information for donors and patients, which include details such as blood types, HLA antigens (A1, A2, B1, B2, DR1, DR2), PRA scores, and racial background. The protected feature is set to be a binary variable: white (64.8\%) and other racial backgrounds (35.2\%). Patients are categorized based on their PRA scores as follows: those with scores above 0.8 are labeled as highly sensitized; those with scores ranging from 0.1 to 0.8 are considered moderately sensitized; and patients with scores below 0.1 are labeled as lowly sensitized. 

We fix the number of incompatible donor-patient pairs to be 100. Donors and patients are randomly sampled from the dataset to form incompatible pairs independently. The overall compatibility is determined by both blood type and HLA compatibility. HLA compatibility is assessed based on the number of mismatches in the A, B, and DR alleles. Specifically, a donor and patient are considered HLA compatible if their level of HLA mismatch is less than 3. The HLA mismatch level is based on UK kidney matching policies, and it can be calculated using the R package \texttt{transplantr}. Again, we only allow cycles of length at most 3. 

Figure \ref{fig3} reports the average selection rates within each subgroup. From Figure \ref{fig3}, group fairness, individual fairness, and our new fairness criterion all enhance equitable access to transplant opportunities according to their specific fairness criteria.

\begin{figure}[h]
\caption{Simulation results based on UNOS data. The average selection rates within each subgroup are calculated over 100 data replications. The error bars represent the mean $\pm$ 1 standard deviation of the absolute differences in selection rates.  }\label{fig3}
\centering
\includegraphics[width=0.8\textwidth]{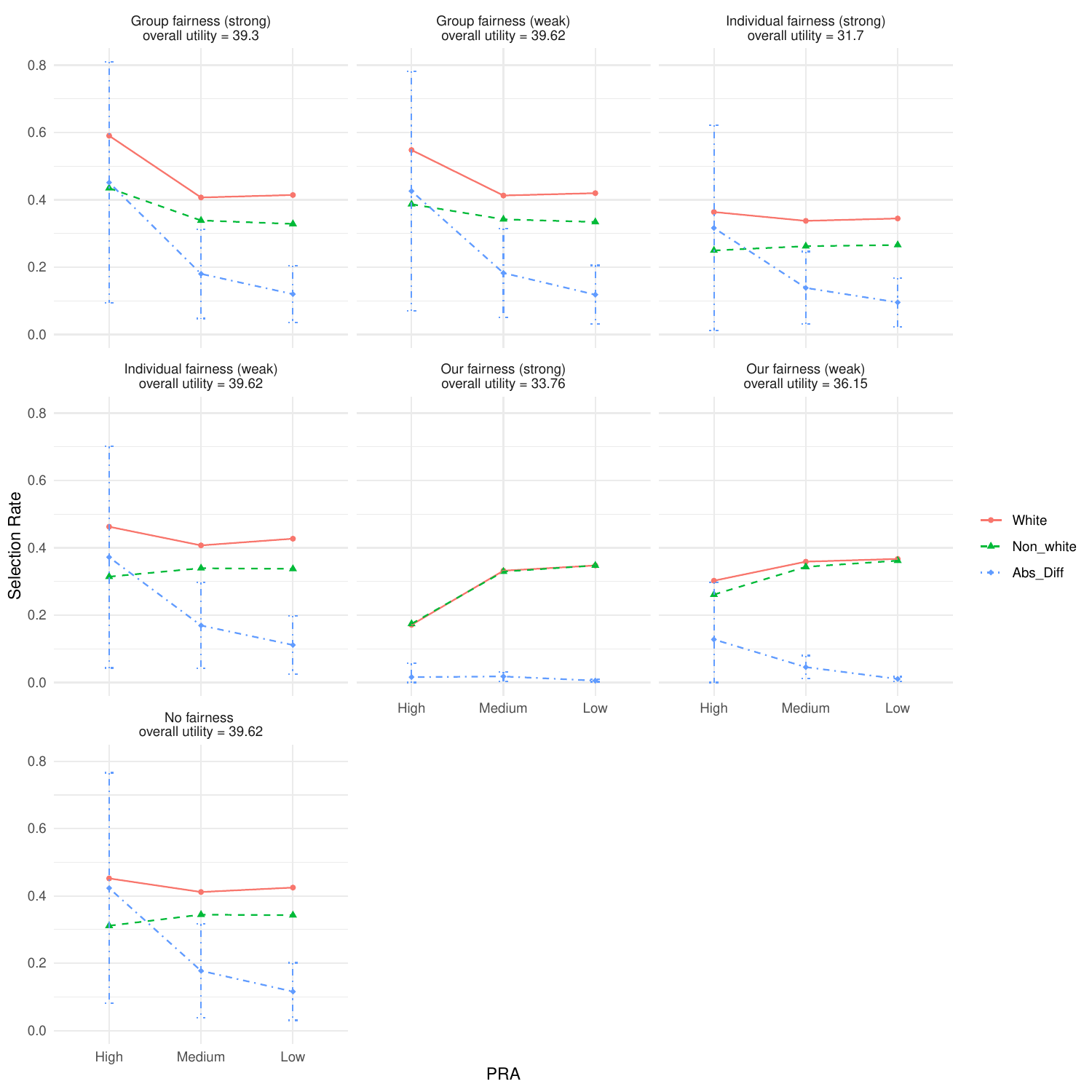}
\end{figure}

\subsection{Experiment on selection probability prediction}\label{subsec::numerical}

We next conduct a numerical experiment to evaluate the prediction accuracy of the selection probability. We assume there is an underlying population consisting of 80\% white and 20\% non-white incompatible pairs, following the same blood type distributions and sensitization distributions in Section \ref{subsec::random_graph}. We assume the historical data $\{\tilde{v}_1,\cdots ,\tilde{v}_{N_0}\}$, current data $\{v_1,\cdots ,v_{N_1}\}$, and future data are independently sampled from the population. We fix $N_0=200$, $N=100$, $B=1000$, and vary $N_1$ in $\{20,40,60,80\}$. For illustration purposes, we focus on our new fairness criterion.

Table \ref{tab1} presents the numerical results of the selection probability prediction. As $L$ increases, the prediction accuracy improves, evidenced by a lower mean squared error (MSE) and a coverage rate of the prediction interval approaching 95\%. This improvement occurs because the size of the unobserved future data, $N-N_1$, decreases, which makes the prediction easier. Although these results indicate the method's validity, there is a bias in the prediction interval due to the finite size of $N_0$. When $N_0$ is large, the set $\{\tilde{v}_1,\cdots ,\tilde{v}_{N_0}\}$ closely approximates the underlying population distribution; while when $N_0$ is small, the distribution of $\{\tilde{v}_1,\cdots ,\tilde{v}_{N_0}\}$ may differ from the underlying population distribution. 

\begin{table}
\centering
\begin{tabular}{|l|c|c|c|c|}
\hline
$N_1$ & 20 & 40 & 60 & 80 \\
\hline
MSE & 0.079 & 0.056 & 0.045 & 0.038 \\
\hline
Coverage & 0.977 & 0.965 & 0.962 &  0.958 \\
\hline
Width & 0.445 & 0.332 & 0.265 & 0.227 \\
\hline
\end{tabular}
\caption{Accuracy of the predicted selection probability of $v_1,\cdots ,v_{N_1}$. The results are averaged over 100 data replications.}\label{tab1}
\end{table}

\section{Discussion}\label{sec::conclusion}

In this paper, we propose a new fairness criterion that balances selection probabilities within protected groups across each sensitization level. Based on the calibration principle in machine learning, this fairness criterion offers a meaningful and innovative approach in the context of kidney exchange. We propose an efficient solution to implement this criterion and conduct both theoretical and empirical evaluations to analyze the associated price of fairness.

\bibliographystyle{chicago}      
\bibliography{fairness}   

\newpage

\appendix

\renewcommand{\thetheorem}{S\arabic{theorem}}
\setcounter{theorem}{0}
\renewcommand{\thelemma}{S\arabic{lemma}}
\setcounter{lemma}{0}
\renewcommand{\theproposition}{S\arabic{proposition}}
\setcounter{proposition}{0}
\renewcommand{\thecorollary}{S\arabic{corollary}}
\setcounter{corollary}{0}
\renewcommand{\thedefinition}{S\arabic{definition}}
\setcounter{definition}{0}
\renewcommand{\thealgo}{S\arabic{algo}}
\setcounter{algo}{0}
\renewcommand{\thepage}{S\arabic{page}}
\setcounter{page}{1}
\renewcommand{\theequation}{S\arabic{equation}}
\setcounter{equation}{0}
\renewcommand{\thetable}{S\arabic{table}}
\setcounter{table}{0}
\renewcommand{\thefigure}{S\arabic{figure}}
\setcounter{figure}{0}

\begin{center}
\huge \bfseries  Supplementary material
\end{center}

\bigskip 

Section \ref{sec::appendix-proofs} presents all the mathematical proofs. Section \ref{sec::appendix-simulation} presents more simulations
with failure-ware strategies.

\section{Proofs}\label{sec::appendix-proofs}

The proof of Proposition \ref{prop2} depends on the following lemma. 

\begin{lemma}\label{lem2}
Fix $k \ge 2$ and $p \in (0,1)$. For each $n$, let $(A_1,\dots,A_k)$ be a $k$-partite undirected
random graph with $|A_i| = n$ for all $i$, where every potential edge between distinct parts is present independently with probability at least $p$.  

Then, with probability $1 - o(1)$ as $n \to \infty$, the graph contains $n$ vertex-disjoint copies of the complete $k$-partite graph with one vertex in each part; that is, there exist $n$ disjoint $k$-tuples, each consisting of one vertex from every part $A_i$, such that all $\binom{k}{2}$ inter-part edges among the vertices in each $k$-tuple are present.
\end{lemma}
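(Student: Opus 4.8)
The plan is to show that a random $k$-partite graph on parts of size $n$ with inter-part edge probability at least $p$ contains a near-perfect ``rainbow clique factor'' (i.e.\ $n$ disjoint transversal $k$-cliques) with high probability, via the absorption method or, more simply, a two-phase greedy-plus-cleanup argument. First I would note the expected number of transversal $k$-cliques is $\Theta(n^k p^{\binom{k}{2}})$, which is polynomial in $n$, so clique-factors of the right size should be abundant; the issue is disjointness. I would build the factor greedily: repeatedly pick any transversal $k$-clique among the currently unused vertices and remove its $k$ vertices. The key quantitative claim is that as long as at least $\varepsilon n$ vertices remain unused in each part, a transversal $k$-clique on the unused vertices still exists with probability $1-o(1)$.

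To establish that claim, fix a stage where the unused portion of each part $A_i$ has size $m_i \ge \varepsilon n$; since the edges are present independently with probability at least $p$ and the set of used vertices depends only on edges already examined, I would want to argue the relevant independence carefully — either by revealing edges lazily so that edges inside the unused set are still fresh, or by a union bound over all possible ``unused sets'' of the event that such a set spans no transversal clique. The cleanest route is a second-moment / Janson-type bound: for a fixed family of $\varepsilon n$ vertices in each part, the number $X$ of transversal $k$-cliques has mean $\mu \ge (\varepsilon n)^k p^{\binom{k}{2}}$, and its variance is dominated by pairs of cliques sharing at least one vertex, giving $\mathrm{Var}(X) = o(\mu^2)$ when $k$ is fixed and $n\to\infty$; hence $\Pr(X=0) = o(1)$, in fact $o\big(\binom{n}{\varepsilon n}^{-k}\big)$ by a Chernoff-type lower tail or Janson's inequality, which survives the union bound over all choices of unused sets. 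This shows the greedy process runs until fewer than $\varepsilon n$ vertices remain in some part, covering at least $(1-k\varepsilon)n$ vertices per part with disjoint transversal cliques.

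Finally, since the statement only asks for $n$ vertex-disjoint transversal cliques while the parts have exactly $n$ vertices, I would actually need a \emph{perfect} transversal clique factor, not just a near-perfect one. To upgrade, I would either (i) set aside a small random ``absorbing'' structure at the start — a collection of flexible short configurations that can swallow any small leftover set into a clique factor — following the standard absorption template, and then apply the greedy argument to the rest; or (ii) observe that the same second-moment argument applied with $\varepsilon$ replaced by $c/n$ still gives, with probability $1-o(1)$, that \emph{every} set of a bounded number of leftover vertices (one per a few parts) can be completed, so the leftover can be cleared directly. I expect the main obstacle to be exactly this last-few-vertices issue: making the union bound over ``which vertices are still unused'' go through simultaneously with the adaptivity of the greedy process, which is why the absorption method (reserving randomness up front) is the safer design. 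Everything else — the first-moment count, the variance computation for fixed vertex sets, and Janson's inequality — is routine for fixed $k$.
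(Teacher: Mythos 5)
Your overall strategy (greedy packing justified by Janson plus a union bound over possible leftover sets, then a cleanup phase) is a legitimate route for results of this type, but as written it has a genuine gap exactly at the point you flag: upgrading the near-perfect factor to a perfect one. After the greedy phase all parts shrink at the same rate, so you are left with $m$ unused vertices \emph{per part}, and these leftover vertices need a perfect transversal clique factor among themselves — they cannot be ``completed'' using already-used vertices without an absorption-type exchange argument. Your option (ii) does not repair this: with $\varepsilon$ replaced by $c/n$ the leftover sets have bounded size, the expected number of transversal cliques inside such a set is $O(1)$, and the probability that a fixed bounded set spans no transversal clique is bounded away from zero, so neither the second moment method nor any union bound can succeed. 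Moreover, the union-bound/greedy argument itself stalls: for leftover sets of size $m$ per part, Janson gives failure probability roughly $e^{-c_p m^2}$ against roughly $e^{k m \ln(en/m)}$ choices of sets, so the argument only works down to $m$ of order $\log n$, and iterating it inside the leftover does not help since the union bound must still range over subsets of the original parts. So a perfect factor genuinely requires an additional idea, and your option (i), absorption, is invoked only as a template rather than constructed (one would need to define the absorbers for transversal cliques, prove their abundance, and select a robust absorbing family), which is where the real work lies. A minor further slip: in the variance/Janson computation the relevant dependent pairs are cliques sharing at least one \emph{edge} (i.e.\ at least two vertices in different parts); cliques sharing a single vertex share no edges and are independent, which in fact makes the exponent $\Theta(m^2)$ rather than $\Theta(m)$ — this helps you, but the bookkeeping should be stated correctly.

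For comparison, the paper avoids all of this with a much more elementary inductive construction: it first finds a perfect matching between $A_1$ and $A_2$ (classical for dense bipartite random graphs), then contracts each matched pair into a super-vertex and joins it to a vertex of $A_3$ precisely when all required edges are present — these auxiliary edges are independent with probability at least $p^2$ because they use disjoint sets of underlying edge indicators — so another perfect matching exists with high probability, and iterating $k-1$ times and taking a union bound yields the $n$ disjoint transversal $k$-tuples exactly. If you want to salvage your approach, either carry out the absorption construction in full, or note that the perfect-matching iteration is the cheaper way to get the exact factor that the lemma demands.
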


\begin{proof}[Proof of Lemma \ref{lem2}]
We build the statement in four steps.

\medskip
\noindent\textbf{Step 1: Bipartite building block.}
Let $X,Y$ be disjoint sets of size $n$, and consider a bipartite graph in which each edge $(x,y)$ is present independently with probability at least $q>0$.  By monotone coupling with $G(n,n,q)$, this random graph stochastically dominates an Erd\H{o}s--R\'enyi bipartite graph.  It is classical that $G(n,n,q)$ has a perfect matching with probability $1-o(1)$ as $n\to\infty$, and “having a perfect matching’’ is monotone in the edge set.  Hence the original bipartite graph also has a perfect matching with probability $1-o(1)$.

\medskip
\noindent\textbf{Step 2: Uniform representation and filtration.}
For each unordered pair $(u,v)$ with $u\in A_i$, $v\in A_j$, $i\neq j$, let $U_{(u,v)}$ be independent $\mathrm{Unif}[0,1]$.  If the designated probability of the edge $(u,v)$ is $p_{(u,v)}\in[p,1]$, we declare $(u,v)$ present iff $U_{(u,v)}\le p_{(u,v)}$.  This yields exactly the model in the lemma.

For $t=1,\dots,k-1$, define
\[
\mathcal{F}_t
:= \sigma\bigl( U_{(u,v)} : u\in A_i,\ v\in A_j,\ i\neq j,\ \{u,v\}\subseteq A_1\cup\cdots\cup A_{t+1}\bigr),
\]
which reveals all edges whose endpoints lie in $A_1,\dots,A_{t+1}$.

\medskip
\noindent\textbf{Step 3: Inductive construction of $(t+1)$-tuples.}
We build the desired $k$-tuples in $k-1$ rounds.  At the end of round $t\ge2$ we will have $n$ disjoint sets $B^{(t)}_1,\dots,B^{(t)}_n$, each containing one vertex from each of $A_1,\dots,A_t$ and inducing a complete $t$-partite subgraph.

\emph{Round $t=2$.}  
Consider the bipartite graph between $A_1$ and $A_2$.  All edges are independent and present with probability at least $p$, so by Step 1 it has a perfect matching with probability $1-o(1)$.  Fix such a matching and write its edges as $(x_j,y_j)$, $j=1,\dots,n$, with $x_j\in A_1$, $y_j\in A_2$, and define
\[
B^{(2)}_j:=\{x_j,y_j\},\qquad B_2:=\{B^{(2)}_1,\dots,B^{(2)}_n\}.
\]

\emph{Inductive step.}  
Assume for some $t\in\{2,\dots,k-1\}$ that we have disjoint sets
\[
B_t = \{B^{(t)}_1,\dots,B^{(t)}_n\}
\]
such that each $B^{(t)}_j$ contains exactly one vertex from each of $A_1,\dots,A_t$ and induces a complete $t$-partite subgraph.  Then $B_t$ is $\mathcal{F}_{t-1}$-measurable.

Define a bipartite graph $\Gamma_t$ whose left part is $B_t$ and whose right part is $A_{t+1}$.  For $B\in B_t$ and $a\in A_{t+1}$, put an edge between $B$ and $a$ in $\Gamma_t$ if and only if $a$ is adjacent to every vertex in $B$ in the original graph, i.e.
\[
(B,a)\in E(\Gamma_t)
\quad\Longleftrightarrow\quad
U_{(v,a)}\le p_{(v,a)}\ \text{for all } v\in B.
\]
Conditional on $\mathcal{F}_{t-1}$, $B_t$ is fixed and the variables $\{U_{(v,a)}:v\in B\}$ are independent, with $|B|=t$ and $p_{(v,a)}\ge p$.  Hence
\[
\mathrm{Pr}\bigl((B,a)\in E(\Gamma_t)\mid\mathcal{F}_{t-1}\bigr)
=\prod_{v\in B} p_{(v,a)}
\ge p^t=:q_t>0.
\]
For distinct pairs $(B,a)\neq (B',a')$, the sets of uniforms $\{U_{(v,a)}:v\in B\}$ and $\{U_{(v',a')}:v'\in B'\}$ are disjoint, so the edge indicators in $\Gamma_t$ are independent, each with conditional probability at least $q_t$.

Thus, conditional on $\mathcal{F}_{t-1}$, $\Gamma_t$ satisfies the assumptions of Step 1 with parameter $q_t$, and since $|B_t|=|A_{t+1}|=n$, it has a perfect matching with probability $1-o(1)$.  On that event, write the matching as $(B^{(t)}_j,a_j)$, $j=1,\dots,n$, and set
\[
B^{(t+1)}_j := B^{(t)}_j\cup\{a_j\},\qquad
B_{t+1} := \{B^{(t+1)}_1,\dots,B^{(t+1)}_n\}.
\]
Each $B^{(t+1)}_j$ is then a complete $(t+1)$-partite subgraph with one vertex in each of $A_1,\dots,A_{t+1}$.

\medskip
\noindent\textbf{Step 4: Union bound over rounds.}
There are $k-1$ rounds.  For each fixed $t$, the conditional failure probability is $o(1)$ as $n\to\infty$, hence the unconditional failure probability is $o(1)$.  Since $k$ is fixed, a union bound shows that with probability $1-o(1)$ all rounds succeed, yielding $n$ disjoint $k$-tuples as in the statement.
\end{proof}

\begin{proof}[Proof of Proposition \ref{prop2}]
We organize the proof into six steps:
\begin{itemize}
\item Step 1 defines coarse and fine types and their frequencies.
\item Step 2 introduces template types and a coarse-type linear program.
\item Step 3 shows that the unconstrained optimal value is asymptotic to the LP optimum.
\item Step 4 splits coarse usage into fine types in a way that encodes priorities.
\item Step 5 realizes these fine-type counts in $G_N$ using Lemma \ref{lem2}.
\item Step 6 combines everything.
\end{itemize}

Throughout, we work under the assumptions of Section \ref{sec::theory}.

\medskip
\noindent\textbf{Step 1: Coarse and fine types and their frequencies.}
Each vertex $v$ (an incompatible donor--patient pair) has
\[
b_1(v), b_2(v)\in\{O,A,B,AB\},\quad r(v)\in\{r_1,\dots,r_M\},\quad A(v)\in\{0,1\}.
\]
Define the coarse type
\[
\tilde\tau(v)
:= (b_1(v),b_2(v),r(v))
\in\widetilde{\mathbb{T}}
:=\{O,A,B,AB\}^2\times\{r_1,\dots,r_M\},
\]
and the fine type
\[
\tau(v)
:= (\tilde\tau(v),A(v))
\in\mathbb{T}
:=\widetilde{\mathbb{T}}\times\{0,1\}.
\]

For each $N$, let $V_N$ denote the vertex set of $G_N$, with $|V_N|=N$.  For $t\in\mathbb{T}$ and $\tilde t\in\widetilde{\mathbb{T}}$, define
\[
n_t(N) := \bigl|\{v\in V_N:\tau(v)=t\}\bigr|,
\qquad
n_{\tilde t}(N) := \sum_{a=0}^1 n_{(\tilde t,a)}(N),
\]
and let $\pi_t>0,\pi_{\tilde t}>0$ be the corresponding type probabilities under the i.i.d.\ type distribution.

By the strong law of large numbers, almost surely
\begin{equation}\label{eq:LLN-types-prop2-again}
\frac{n_t(N)}{N} \to \pi_t,
\qquad
\frac{n_{\tilde t}(N)}{N} \to \pi_{\tilde t}
\quad\text{as } N\to\infty\ \text{for all }t,\tilde t.
\end{equation}
We henceforth work on any realization for which \eqref{eq:LLN-types-prop2-again} holds.

\medskip
\noindent\textbf{Step 2: Template types and a coarse-type linear program.}
Ignoring the protected feature, we classify cycles (or relevant subsets) into finitely many “templates”.  Two sets $c_1,c_2\in\mathcal{C}(G_N)$ have the same \emph{template type} if there is a bijection $\phi:V(c_1)\to V(c_2)$ such that:
\begin{itemize}
  \item $\tilde\tau(v)=\tilde\tau(\phi(v))$ for all $v\in V(c_1)$; and
  \item $(v_1,v_2)\in E_N$ iff $(\phi(v_1),\phi(v_2))\in E_N$.
\end{itemize}
Because the size $|c|$ is uniformly bounded and $\widetilde{\mathbb{T}}$ is finite, there are only finitely many template types.  Index them by $i=1,\dots,K$.

For each template $i$, fix a representative directed pattern $H_i$ with vertex set $V(H_i)$ and $|V(H_i)|=s_i$.  For each coarse type $\tilde t\in\widetilde{\mathbb{T}}$, let
\[
\tilde a_{\tilde t,i}
:= \bigl|\{w\in V(H_i):\tilde\tau(w)=\tilde t\}\bigr|.
\]
Since $u(c)$ depends only on the induced subgraph and the coarse types (and is invariant under relabelings preserving this information), $u_i:=u(c)$ is well-defined for each template $i$.

We consider the coarse-type linear program
\begin{equation}\label{eq:LP-coarse-prop2-again}
\begin{aligned}
\max_{z\geq 0}\quad & U^\T z := \sum_{i=1}^K u_i z_i, \\
\text{s.t.}\quad
& \sum_{i=1}^K \tilde a_{\tilde t,i} z_i \le \pi_{\tilde t}, \quad \forall \tilde t\in\widetilde{\mathbb{T}}.
\end{aligned}
\end{equation}
Feasibility is clear and boundedness follows from $u(c)/|c|\le C$ and bounded template sizes.  Let $\rho$ be the optimal value and $z^\star$ an optimal solution.

\medskip
\noindent\textbf{Step 3: Asymptotic behavior of the unconstrained optimum.}
We show that
\begin{equation}\label{eq:Opt-asymptotic-prop2-again}
\mathrm{Opt}(G_N)=\rho N+o(N)
\quad\text{almost surely.}
\end{equation}

\smallskip
\emph{Step 3A: Upper bound.}
Let $(x_c)_{c\in\mathcal{C}(G_N)}$ be an optimal solution of the unconstrained problem \eqref{eq::theory-no-fairness} on $G_N$.  
For each template type $i$, define
\[
Z^{(N)}_i
:= \sum_{c\in\mathcal{C}(G_N)} x_c\,1_{\{c\text{ has template type }i\}},
\qquad
y^{(N)} := \frac{1}{N}Z^{(N)}.
\]
Vertex-disjointness implies that, for each coarse type $\tilde t$,
\[
\sum_{i=1}^K \tilde a_{\tilde t,i} Z^{(N)}_i
\le n_{\tilde t}(N).
\]
Dividing by $N$ and letting $N\to\infty$, any subsequential limit $y$ of $\{y^{(N)}\}$ satisfies
\[
\sum_{i=1}^K \tilde a_{\tilde t,i} y_i \le \pi_{\tilde t},
\qquad\forall\tilde t\in\widetilde{\mathbb{T}},
\]
so $y$ is feasible for \eqref{eq:LP-coarse-prop2-again} and $U^\T y\le\rho$.

The objective value on $G_N$ is
\[
\mathrm{Opt}(G_N) = \sum_{i=1}^K u_i Z^{(N)}_i = N U^\T y^{(N)},
\]
so
\[
\limsup_{N\to\infty}\frac{\mathrm{Opt}(G_N)}{N}
= \limsup_{N\to\infty}U^\T y^{(N)}
\le \rho.
\]

\smallskip
\emph{Step 3B: Lower bound.}
Define integer vectors
\[
Z^{(N)} := \bigl(\lfloor Nz^\star_1\rfloor,\dots,\lfloor Nz^\star_K\rfloor\bigr),
\qquad
y^{(N)} := \frac{1}{N}Z^{(N)}.
\]
Then $y^{(N)}\to z^\star$, and for each coarse type $\tilde t$,
\[
\sum_{i=1}^K \tilde a_{\tilde t,i} y^{(N)}_i
\le \sum_{i=1}^K \tilde a_{\tilde t,i} z^\star_i
\le \pi_{\tilde t},
\]
so $\sum_{i} \tilde a_{\tilde t,i} Z_i^{(N)} \le N\pi_{\tilde t}+O(1)$.  Together with \eqref{eq:LLN-types-prop2-again}, this implies $\sum_i \tilde a_{\tilde t,i} Z_i^{(N)} \le n_{\tilde t}(N)$ for all large $N$, so the coarse-type supply of vertices suffices to realize $Z^{(N)}$ templates.

To realize these templates, fix $i$.  Label the vertices of $H_i$ as $w_1,\dots,w_{s_i}$ with coarse types $\tilde t_1,\dots,\tilde t_{s_i}$.  For each $(i,\ell)$ choose a subset
\[
A^{(i)}_\ell(N)\subseteq\{v\in V_N : \tilde\tau(v)=\tilde t_\ell\}
\]
of size $Z^{(N)}_i$, such that all sets $A^{(i)}_\ell(N)$ are disjoint across all $(i,\ell)$; this is possible by the previous capacity inequality.  

For fixed $i$, consider the induced $s_i$-partite undirected graph on parts $A^{(i)}_1(N),\dots,A^{(i)}_{s_i}(N)$.  
Let $L_i$ be the number of directed edges in $H_i$.  
If $u\in A^{(i)}_\ell(N)$ and $v\in A^{(i)}_m(N)$ with $\ell\neq m$, we connect $u$ and $v$ by an undirected edge if all directed edges between $w_\ell$ and $w_m$ required by $H_i$ (and medically admissible) are present between $u$ and $v$ in $G_N$.  Each such undirected edge appears with probability at least $p_0^{L_i}=:p_i>0$, and different pairs $(u,v)$ use disjoint sets of underlying directed edges, so these edges are independent.

Thus, for each $i$, the $s_i$-partite graph on $A^{(i)}_1(N),\dots,A^{(i)}_{s_i}(N)$ satisfies the hypotheses of Lemma \ref{lem2} with $k=s_i$, $n=Z^{(N)}_i$, and parameter $p_i>0$.  If $z^\star_i>0$, then $Z^{(N)}_i\to\infty$ and Lemma \ref{lem2} implies that this graph contains $Z^{(N)}_i$ vertex-disjoint complete $s_i$-partite subgraphs with probability $1-o(1)$.  Each such subgraph corresponds to one copy of $H_i$.  If $z^\star_i=0$, then $Z^{(N)}_i=O(1)$ and failure to realize these copies changes the total utility by at most $O(1)=o(N)$.

Since there are finitely many templates, a union bound shows that with probability $1-o(1)$ we can realize all templates $H_i$ with counts $Z^{(N)}_i$ simultaneously.  On this event, we obtain an exchange plan with template counts $Z^{(N)}$ and utility
\[
\sum_{i=1}^K Z^{(N)}_i u_i
= N\,U^\T z^\star + O(1)
= \rho N + o(N).
\]
Hence $\liminf_{N\to\infty}\frac{\mathrm{Opt}(G_N)}{N}\ge\rho$, proving \eqref{eq:Opt-asymptotic-prop2-again}.

\medskip
\noindent\textbf{Step 4: Splitting coarse usage into fine types and encoding priority.}
We now refine coarse-type usage to fine types (including the protected feature) in a way that captures the desired subgroup prioritization.

\smallskip
\emph{Step 4A: Coarse usage per type.}
For each coarse type $\tilde t$, define
\[
\lambda_{\tilde t}
:= \sum_{i=1}^K \tilde a_{\tilde t,i} z^\star_i
\le \pi_{\tilde t}.
\]

\smallskip
\emph{Step 4B: Splitting into fine-type usages.}
Choose nonnegative numbers $\lambda_t$ for $t=(\tilde t,a)\in\mathbb{T}$ such that
\[
\lambda_{(\tilde t,0)}+\lambda_{(\tilde t,1)} = \lambda_{\tilde t},
\qquad
\lambda_t\le \pi_t\quad\forall t.
\]
For each prioritized quadruple $(b_1,b_2,i,j)\in\mathcal{P}$ with coarse type $\tilde t=(b_1,b_2,r_j)$ and fine types
\[
t^{(\mathrm{prio})}=(\tilde t,i), \qquad t^{(\mathrm{unprio})}=(\tilde t,1-i),
\]
we further require that either $\lambda_{t^{(\mathrm{unprio})}}=0$ or $\lambda_{t^{(\mathrm{prio})}}=\pi_{t^{(\mathrm{prio})}}$.
This is achieved by the rule
\[
(\lambda_{t^{(\mathrm{prio})}},\lambda_{t^{(\mathrm{unprio})}})
=
\begin{cases}
(\lambda_{\tilde t},0), & \lambda_{\tilde t}\le\pi_{t^{(\mathrm{prio})}},\\[0.2em]
(\pi_{t^{(\mathrm{prio})}},\lambda_{\tilde t}-\pi_{t^{(\mathrm{prio})}}), & \lambda_{\tilde t}>\pi_{t^{(\mathrm{prio})}}.
\end{cases}
\]
For coarse types not in any prioritized pair, we pick any feasible split.

\smallskip
\emph{Step 4C: Fine-type template counts.}
We refine the coarse counts $\tilde a_{\tilde t,i}$ into $b_{t,i}\ge0$ so that
\begin{equation}\label{eq:split-constraints-prop2-again}
\sum_{a=0}^1 b_{(\tilde t,a),i} = \tilde a_{\tilde t,i}\quad\forall \tilde t,i,
\qquad
\sum_{i=1}^K b_{t,i} z^\star_i = \lambda_t\quad\forall t\in\mathbb{T}.
\end{equation}
If $\lambda_{\tilde t}>0$, we set
\[
b_{(\tilde t,a),i}
:= \tilde a_{\tilde t,i}\,\frac{\lambda_{(\tilde t,a)}}{\lambda_{\tilde t}},
\quad a\in\{0,1\},
\]
and if $\lambda_{\tilde t}=0$, we set $b_{(\tilde t,a),i}=0$.  A direct check shows that \eqref{eq:split-constraints-prop2-again} holds.

\medskip
\noindent\textbf{Step 5: Constructing matching fine-type counts in $G_N$.}
We now pass from the limiting fractions $\lambda_t$ and $b_{t,i}$ to integer counts in $G_N$.

\smallskip
\emph{Step 5A: Target number of matched vertices per fine type.}
For each fine type $t$ and $N$, let
\[
M^{(N)}_t := \bigl\lfloor N\lambda_t\bigr\rfloor.
\]
By \eqref{eq:LLN-types-prop2-again} and $\lambda_t\le\pi_t$, we have $M^{(N)}_t\le n_t(N)$ for all large $N$, so we can choose
\[
V^{\mathrm{use}}_t(N)\subseteq\{v\in V_N:\tau(v)=t\},
\qquad|V^{\mathrm{use}}_t(N)|=M^{(N)}_t.
\]
For each prioritized quadruple $(b_1,b_2,i,j)\in\mathcal{P}$ with fine types $t^{(\mathrm{prio})},t^{(\mathrm{unprio})}$, our construction of $\lambda_t$ implies that, up to $o(N)$ vertices, either only the prioritized type is used or the prioritized type is essentially saturated.  This exactly matches the asymptotic priority requirement in the definition of $\mathrm{Opt}^{\mathrm{prio}}_{\varepsilon_N}(G_N)$ for some sequence $\varepsilon_N\downarrow0$.

\smallskip
\emph{Step 5B: Matching fine-type/template counts and realization.}
Let $\gamma_{t,i} := b_{t,i} z^\star_i$.  We seek integers $m^{(N)}_{t,i}$ such that
\[
\sum_{i=1}^K m^{(N)}_{t,i} = M^{(N)}_t \quad\forall t\in\mathbb{T},
\qquad
\sum_{t\in\mathbb{T}} m^{(N)}_{t,i} = s_i Z^{(N)}_i\quad\forall i,
\]
and $m^{(N)}_{t,i} = N\gamma_{t,i} + O(1)$ as $N\to\infty$.  This can be obtained by a standard rounding argument on the finite transportation polytope: start from $m^{(N)}_{t,i}:=\lfloor N\gamma_{t,i}\rfloor$ and adjust finitely many entries by $O(1)$ to match the row and column sums exactly.  The total adjustment per $(t,i)$ is $O(1)$, so this changes the utility by at most $O(1)=o(N)$.

For each pair $(t,i)$, partition $V^{\mathrm{use}}_t(N)$ into disjoint subsets of size $m^{(N)}_{t,i}$ and assign each subset to the appropriate position in the $Z^{(N)}_i$ copies of $H_i$.  This yields, for each template $i$, parts $A^{(i)}_1(N),\dots,A^{(i)}_{s_i}(N)$ of size $Z^{(N)}_i$ with prescribed fine (and hence coarse) types.

We now apply exactly the same Lemma \ref{lem2}-based construction as in Step 3B, restricted to these parts.  The independence and lower bounds on edge probabilities are unchanged, so with probability $1-o(1)$ we can realize the desired $Z^{(N)}_i$ disjoint copies of $H_i$ using precisely these vertices, for all $i$ with $z^\star_i>0$.  Templates with $z^\star_i=0$ contribute $O(1)$ vertices and can be handled arbitrarily, with $o(N)$ effect.

Thus, with probability $1-o(1)$, for all large $N$ there exists an exchange plan that:
\begin{itemize}
  \item uses exactly the vertices in $\bigcup_t V^{\mathrm{use}}_t(N)$ up to $O(1)$,
  \item has template counts $Z^{(N)}$, and
  \item satisfies the asymptotic priority requirement with sequence $\varepsilon_N$.
\end{itemize}
The corresponding utility is
\[
\sum_{i=1}^K Z^{(N)}_i u_i + O(1)
= \rho N + o(N),
\]
so almost surely
\[
\mathrm{Opt}^{\mathrm{prio}}_{\varepsilon_N}(G_N)
\ge \rho N + o(N).
\]

\medskip
\noindent\textbf{Step 6: Conclusion.}
From Step 3 we know $\mathrm{Opt}(G_N)=\rho N+o(N)$ almost surely, and from Step 5 we know $\mathrm{Opt}^{\mathrm{prio}}_{\varepsilon_N}(G_N)\ge \rho N+o(N)$ almost surely.  Hence
\[
\mathrm{Opt}(G_N) - \mathrm{Opt}^{\mathrm{prio}}_{\varepsilon_N}(G_N)
= o(N)
\quad\text{almost surely,}
\]
proving the proposition.
\end{proof}

\begin{proof}[Proof of Proposition \ref{cor1}]
We bound the price of fairness in terms of the number of matched vertices and show it is at most the explicit quantity $L$ in the statement.  The argument is purely type-level and ignores graph structure until the final step.

\medskip
\noindent\textbf{Step 1: Local loss for a fixed \((b_1,b_2,r)\).}
Fix $(b_1,b_2)$ and $r$, and consider the two fine types
\[
(b_1,b_2,r,1)\quad\text{and}\quad(b_1,b_2,r,0),
\]
with proportions $\mu_{b_1,b_2,r,1}$ and $\mu_{b_1,b_2,r,0}$.  Let
\[
\overline{\mu}_{r,a}
:= \sum_{b_1,b_2}\mu_{b_1,b_2,r,a},\qquad a\in\{0,1\},
\]
be the total mass at level $r$ in group $A=a$.

In the unconstrained optimum, the total fraction of matched vertices of these two types can be written as
\[
p\,(\mu_{b_1,b_2,r,1}+\mu_{b_1,b_2,r,0})
\]
for some $p\in[0,1]$, which we can think of as the local selection probability at $(b_1,b_2,r)$.  We now compare this with an allocation that enforces the stronger requirement that the selection probabilities at level $r$ for $A=0$ and $A=1$ are exactly equal:
\[
\frac{\text{matches in }(A=1,R=r)}{\overline{\mu}_{r,1}}
=
\frac{\text{matches in }(A=0,R=r)}{\overline{\mu}_{r,0}}.
\]
This is stricter than the original fairness constraint, so any loss bound obtained here is conservative.

If the ratio $\mu_{b_1,b_2,r,1}/\mu_{b_1,b_2,r,0}$ exceeds the global ratio $\overline{\mu}_{r,1}/\overline{\mu}_{r,0}$, then this type is “too rich” in $A=1$ individuals.  A straightforward two-type calculation (keeping all possible $A=0$ matches and enforcing the global ratio) shows that the fair allocation must lose at most
\[
\frac{
\mu_{b_1,b_2,r,1}\overline{\mu}_{r,0}
- \mu_{b_1,b_2,r,0}\overline{\mu}_{r,1}
}{
(\mu_{b_1,b_2,r,1}+\mu_{b_1,b_2,r,0})\,\overline{\mu}_{r,0}
}
\]
of the local matched mass.  When the inequality is reversed, the type is “too rich” in $A=0$, and by symmetry (exchanging the roles of $A=0$ and $A=1$) the maximal local loss is
\[
\frac{
\mu_{b_1,b_2,r,0}\overline{\mu}_{r,1}
- \mu_{b_1,b_2,r,1}\overline{\mu}_{r,0}
}{
(\mu_{b_1,b_2,r,1}+\mu_{b_1,b_2,r,0})\,\overline{\mu}_{r,1}
}.
\]
Taking the larger of these two expressions gives the loss bound for fixed $(b_1,b_2,r)$.

\medskip
\noindent\textbf{Step 2: Global bound across all \((b_1,b_2,r)\).}
At a fixed sensitization level $r$, the total matches in the unconstrained optimum are a sum over $(b_1,b_2)$ of $M^*_{b_1,b_2,r}(N)$, and the number of matches in any fair allocation is the corresponding sum after trimming.  Hence, at level $r$, the (asymptotic) fractional loss is a convex combination of the local losses $\mathrm{Loss}_{b_1,b_2,r}$, and so is bounded by their maximum over $(b_1,b_2)$.

Taking also the maximum over $r$ yields the loss bound in the statement. 

\medskip
\noindent\textbf{Step 3: Implementability in $G_N$ via subgroup prioritization.}
We now argue that the type-level bound $L$ governs the asymptotic price of fairness in the random graphs $G_N$.

For each triple $(b_1,b_2,r)$, the trimming operation in Step 1 can be implemented by prioritizing between the fine types $(b_1,b_2,r,1)$ and $(b_1,b_2,r,0)$, as in Section \ref{subsec::general}: we decide, for each $(b_1,b_2,r)$, whether to prioritize $A=1$ or $A=0$, and collect these choices into a priority set $\mathcal{P}$.

Consider the constrained optimization problem with subgroup prioritization constraints \eqref{eq::theory-priority-obj}–\eqref{eq::theory-priority-con} associated with $\mathcal{P}$.  By Proposition \ref{prop2}, there exists a sequence $\varepsilon_N\downarrow0$ such that we can impose the asymptotic priority requirements for all $(b_1,b_2,i,j)\in\mathcal{P}$ while losing only $o(N)$ matches relative to the unconstrained optimum.

Within this prioritized class, we can then discard matches exactly as in the type-level construction of Step 1 to enforce equality of selection probabilities between $A=0$ and $A=1$ at each $r$.  The additional loss caused by this trimming is, by Step 2, at most a fraction $L$ of the total number of matches, that is, at most $L\,\mathrm{Opt}(G_N)+o(N)$ matches in $G_N$.

Thus, almost surely, there exists an allocation satisfying the stronger condition of equal selection probabilities at each $r$ whose total number of matches is at least $(1-L)\,\mathrm{Opt}(G_N)-o(N)$.  Since the fairness constraint \eqref{eq::fair-0} only requires the difference in selection probabilities at each $r_j$ to be bounded by $l_j$ and not necessarily equal to zero, any optimal solution under \eqref{eq::fair-0} is at least as efficient as this equalized allocation.  Consequently, the asymptotic price of fairness due to \eqref{eq::fair-0} is no greater than $L$.
\end{proof}

\begin{proof}[Proof of Proposition \ref{prop3}]
We work in the random graph model of Section \ref{sec::theory} specialized to the case
where the objective is to maximize the number of transplants (no recourse strategies), and
we adopt the additional structural assumptions stated before Proposition \ref{prop3}.

\medskip
\noindent\textbf{Notational convention.}
In the rest of the paper we index blood types of a pair as
\[
(b_1,b_2) = (\text{donor blood type},\ \text{patient blood type}).
\]
By contrast, Ashlagi and Roth describe a pair of \emph{type} $X$--$Y$ when the \emph{patient}
has blood type $X$ and the \emph{donor} has blood type $Y$.
Thus their type $X$--$Y$ corresponds to our ordered pair
\[
(b_1,b_2) = (Y,X).
\]
Whenever we refer below to $X$--$Y$ \emph{pair types} or to their classification
as self-/over-/under-demanded from \cite{ashlagi2014free}, this is always in the
\emph{patient--donor} sense of Ashlagi--Roth; when we translate those statements
into our formulas in terms of $\mu_{b_1\mid a}$ and $\mu_{b_2\mid a}$, we use the
above correspondence.

\medskip
The proof has four main steps.
\begin{itemize}
\item Step 1 recalls the blood-type structure of an asymptotically optimal allocation
without fairness constraints, following \cite{ashlagi2014free}, and translates it into
our $(b_1,b_2)$ notation.
\item Step 2 quantifies how much matched ``mass'' can be reallocated between $A=0$ and
$A=1$ within partially matched blood-type classes; this yields $R_0$ and $R_1$.
\item Step 3 uses this flexibility and, if necessary, discards some matches to enforce
equality of selection probabilities across $A=0$ and $A=1$ at each sensitization
level $r$, and computes the resulting loss.
\item Step 4 uses Proposition \ref{prop2} to argue that this type-level construction can
be implemented in the random graphs $G_N$ up to $o(N)$ loss and that the weaker
fairness constraint \eqref{eq::fair-0} has no higher asymptotic price.
\end{itemize}

\medskip
\noindent\textbf{Step 1: Structure of the unconstrained optimal allocation.}
Fix $N$ and let $G_N$ be the random compatibility graph on $N$ incompatible pairs.
For each $a\in\{0,1\}$ and $r\in\{r_1,\dots,r_M\}$, write
\[
V_{a,r} := \{v\in V_N : A(v)=a,\ R(v)=r\}.
\]
By the strong law and the multiplicative structure,
\[
\frac{|V_{a,r}|}{N}
\;\longrightarrow\;
\sum_{b_1,b_2} \mu_{b_1,b_2,r,a}
= c\,r\,\mu_a
\quad\text{a.s.\ as }N\to\infty.
\]

In \cite{ashlagi2014free}, pair types are indexed as $X$--$Y$ with $X$ the \emph{patient}
blood type and $Y$ the \emph{donor} blood type. Translating into our $(b_1,b_2)$ notation
via $(b_1,b_2)=(Y,X)$, a type $X$--$Y$ in \cite{ashlagi2014free} corresponds to
\emph{patient type $X$ and donor type $Y$}, i.e.\ our $(b_1,b_2)=(Y,X)$.

Using that convention, \cite{ashlagi2014free} partition the $4\times4$ blood-type
pairs $X$--$Y$ as follows:
\begin{itemize}
  \item self-demanded: $X$--$X$ (patient and donor have the same blood type);
  \item over-demanded: $X$--$Y$ with $Y\to X$ and $X\neq Y$ (donor offers a more
        flexible blood type than the patient needs);
  \item under-demanded: $X$--$Y$ with $X\to Y$ and $X\neq Y$ (patient seeks a more
        flexible type than the donor offers);
  \item reciprocally demanded: $A$--$B$ and $B$--$A$.
\end{itemize}

Under the numerical assumptions on blood-type frequencies and sensitization, 
they show that in almost every large pool there exists an optimal allocation
(with exchanges of length at most $3$) whose limiting structure is deterministic
in terms of these blood-type classes. After translating their $X$--$Y$ types into
our $(b_1,b_2)$ notation, the key consequences \emph{within each group}
$(A=a,R=r)$ are:
\begin{itemize}
  \item all self-demanded types are asymptotically fully matched;
  \item all $A$--$B$ and $B$--$A$ \emph{patient--donor} types are asymptotically fully
        matched via 2-cycles and 3-cycles;
  \item all AB--O \emph{patient--donor} types are matched
        using 3-cycles involving overdemanded types;
  \item all remaining overdemanded types are matched to their corresponding
        underdemanded types in 2-cycles;
  \item only the genuinely underdemanded types beyond available overdemanded partners
        remain unmatched.
\end{itemize}

Counting these matched classes in our $(b_1,b_2)$ notation, and using the product form
\[
\mu_{b_1,b_2,r,a}
= c\,r\,\mu_a\,\mu_{b_1\mid a}\mu_{b_2\mid a},
\]
gives exactly the expressions $T_a(r)$ and $S_a(r)$ stated before the proposition.
In particular, for each $a$ and $r$ the unconstrained selection probability in
group $(A=a,R=r)$ converges to
\[
p_a(r) := \frac{T_a(r)}{S_a(r)}.
\]
Aggregating across $a$ gives the total matchable mass at level $r$, proportional
to $Q(r)$ as defined in the statement.

\medskip
\noindent\textbf{Step 2: Reallocatable mass between groups and $R_a$.}
The Ashlagi--Roth construction leaves some blood-type classes
(particularly those involving underdemanded types) only partially utilized. Within
such classes we can reassign which fraction of matched pairs comes from $A=0$ versus
$A=1$, provided we respect:
\begin{itemize}
  \item for each blood-type class, the total matched mass in that class; and
  \item for each group $A=a$, the total number of available pairs of that class.
\end{itemize}

For each ``bottleneck'' class (e.g.\ patient--donor $B$--AB, $O$--AB, $A$--AB, $A$--O,
AB--O, $O$--B in the Ashlagi--Roth notation) this gives a small linear program in two
variables (mass from $A=0$ and from $A=1$). Solving these programs and translating
back to our $(b_1,b_2)$ representation gives the quantities $R_a$ as in the proposition:
\[
\begin{aligned}
R_a=&\min\left\{\phi_{B,AB},2\mu_a\mu_{B\mid a}\mu_{AB\mid a}-\phi_{B,AB}\right\}
+\min\left\{\phi_{O,AB}+\phi_{A,AB},2\mu_a\mu_{A\mid a}\mu_{AB\mid a}-\phi_{O,AB}-\phi_{A,AB}\right\}\\
&+\min\left\{\phi_{A,O}+\phi_{AB,O},2\mu_a\mu_{O\mid a}\mu_{A\mid a}-\phi_{A,O}-\phi_{AB,O}\right\}
+\min\left\{\phi_{O,B},2\mu_a\mu_{O\mid a}\mu_{B\mid a}-\phi_{O,B}\right\}.
\end{aligned}
\]
Informally, $R_a$ is the maximal amount of matched mass that can be shifted in
favour of group $A=a$ by reassigning responsibility for matches within these partially
utilized blood-type classes.

\medskip
\noindent\textbf{Step 3: Enforcing equality of selection probabilities at each $r$.}
Fix a sensitization level $r$. We aim for the stronger condition
\[
\frac{\text{\# matches in }(A=1,R=r)}{|V_{1,r}|}
=
\frac{\text{\# matches in }(A=0,R=r)}{|V_{0,r}|},
\]
i.e.\ $p_1(r)$ and $p_0(r)$ are made equal. This is stricter than \eqref{eq::fair-0},
so any loss bound obtained here is valid for our weaker fairness constraint.

Suppose first that $p_1(r) < p_0(r)$, so group $A=1$ is disadvantaged.

\smallskip
\emph{(a) Using $R_1$ to help $A=1$.}
By reallocating within partially matched classes we can increase the matched mass
of $(A=1,R=r)$ by at most $R_1$, without changing the total mass at level $r$.
This yields a new numerator $T_1(r)+R_1$ for $A=1$ and reduces the numerator
for $A=0$ by the same $R_1$.

If this already suffices for equality,
\[
\frac{T_1(r)+R_1}{S_1(r)} \;\ge\; \frac{T_0(r)}{S_0(r)},
\]
then equality can be achieved with no loss of total matches at level $r$.

\smallskip
\emph{(b) If reallocation is insufficient, discarding from $A=0$.}
Otherwise we must additionally discard matches from $(A=0,R=r)$.
Let $x(r)\ge0$ be the extra amount of matched mass we remove from group $A=0$
at level $r$, beyond the $R_1$ already reallocated. Equality of selection probabilities
requires
\[
\frac{T_1(r)+R_1}{S_1(r)} \;=\; \frac{T_0(r)-x(r)}{S_0(r)},
\]
hence
\[
x(r)
= \frac{S_0(r)T_1(r)-S_1(r)T_0(r)-S_0(r)R_1}{S_1(r)}.
\]

\smallskip
\emph{(c) Symmetric case $p_1(r)\ge p_0(r)$.}
If $p_1(r)\ge p_0(r)$, the roles of the two groups are reversed: we can first
reallocate up to $R_0$ in favour of $A=0$ and, if needed, discard an additional
amount $x'(r)\ge0$ from group $A=1$, where
\[
x'(r)
= \frac{S_1(r)T_0(r)-S_0(r)T_1(r)-S_1(r)R_0}{S_0(r)}.
\]

\smallskip
\emph{(d) Relative loss at level $r$.}
At level $r$, the total matchable mass in the unconstrained optimum is
proportional to $Q(r)$; hence the relative loss (price of fairness) at level
$r$ is at most
\[
\max\left\{
\frac{S_1(r)T_0(r)-S_0(r)T_1(r)-S_0(r)R_1}{S_1(r)Q(r)},\;
\frac{S_0(r)T_1(r)-S_1(r)T_0(r)-S_1(r)R_0}{S_0(r)Q(r)},\;
0
\right\},
\]
which is exactly the quantity appearing in the statement.

\medskip
\noindent\textbf{Step 4: Implementability via subgroup prioritization and Proposition \ref{prop2}.}
Up to now the construction has been at the level of limiting type proportions. We now
argue that it can be implemented in the random graphs $G_N$ (with exchanges of bounded
length) up to $o(N)$ loss.

For each triple $(b_1,b_2,r)$ and $a\in\{0,1\}$, the reallocation and discarding
operations in Step 3 determine whether, at that type, the fair allocation effectively
\emph{prioritizes} group $A=1$ or $A=0$. Collect these choices in a priority set
$\mathcal{P}$ as in Section \ref{subsec::general}, and consider the optimization
problem with subgroup-prioritization constraints
\eqref{eq::theory-priority-obj}–\eqref{eq::theory-priority-con} associated with
$\mathcal{P}$.

By Proposition \ref{prop2}, there exists a sequence $\varepsilon_N\downarrow0$ such that,
almost surely, we can impose the asymptotic priority requirements
\eqref{eq::theory-priority-con} for all $(b_1,b_2,i,j)\in\mathcal{P}$ while losing only
$o(N)$ matches relative to the unconstrained optimum. In other words, the prioritized
solutions implement, up to $o(N)$, the type-level pattern constructed in Step 3.

Within this prioritized class, we can then carry out the same trimming as in Step 3
to enforce equality of selection probabilities at each $r$. The additional loss in the
number of matches is, at level $r$, bounded by the relative loss above, and hence the
overall asymptotic price of fairness is bounded by
\[
\max_{r\in\{r_1,\dots,r_M\}}
\max\left\{
\frac{S_1(r)T_0(r)-S_0(r)T_1(r)-S_0(r)R_1}{S_1(r)Q(r)},\;
\frac{S_0(r)T_1(r)-S_1(r)T_0(r)-S_1(r)R_0}{S_0(r)Q(r)},\;
0
\right\}.
\]

Finally, the original fairness constraint \eqref{eq::fair-0} requires only that the
difference in selection probabilities at each $r_j$ be bounded by $l_j$, not that
they be exactly equal. Thus an optimal solution under \eqref{eq::fair-0} is at least
as efficient as the equalized allocations we have constructed, so its asymptotic
price of fairness is no larger than the bound stated in Proposition \ref{prop3}.

This completes the proof.
\end{proof}

\section{More simulations with failure-aware strategies}\label{sec::appendix-simulation}

We consider a similar data-generating process in Section \ref{subsec::random_graph} but with $N=50$, and we repeat exactly the same data-generating process in Section \ref{subsec::unos}. With additional vertex and edge uncertainties, let the failure probability $p_v$ and $p_{v_i,v_j}$ be independently sampled from a uniform distribution $U(0, 0.3)$. We consider the subset-recourse strategy and choose $\mathcal{S}$ to be the set of relevant subsets of size $(3,1)$.

Figures \ref{fig-s1} and \ref{fig-s2} present the average selection rates of different fairness criteria within each subgroup. The numerical performance with failure-aware strategies is similar to that without failure-aware strategies.

\begin{figure}[h]
\caption{Simulation results with subset-recourse strategy under random graph models. The average selection rates within each subgroup are calculated over 100 data replications. The error bars represent the mean $\pm$ 1 standard deviation of the absolute differences in selection rates. }\label{fig-s1}
\centering
\includegraphics[width=0.8\textwidth]{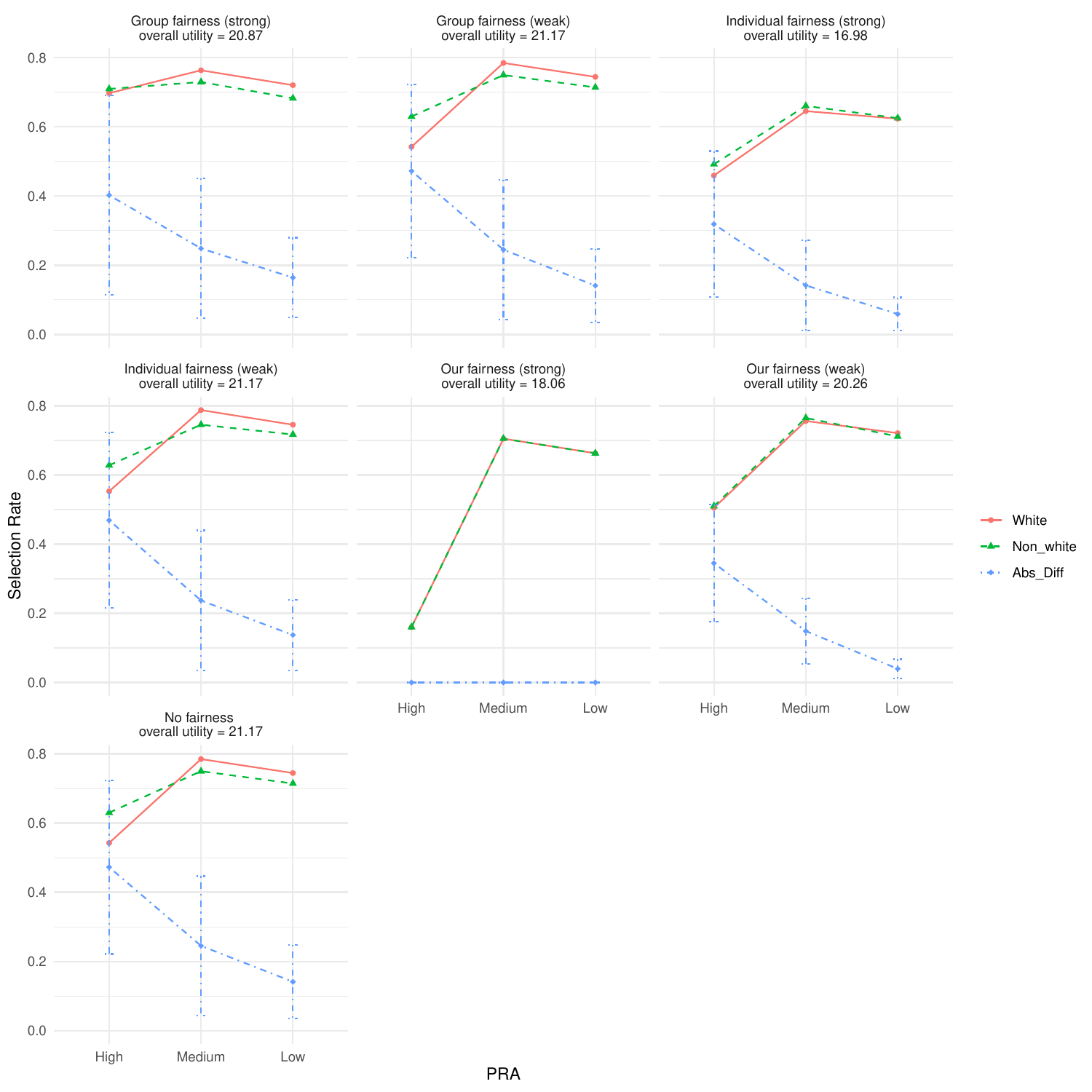}
\end{figure}

\begin{figure}[h]
\caption{Simulation results with subset-recourse strategy based on UNOS data. The average selection rates within each subgroup are calculated over 100 data replications. The error bars represent the mean $\pm$ 1 standard deviation of the absolute differences in selection rates. }\label{fig-s2}
\centering
\includegraphics[width=0.8\textwidth]{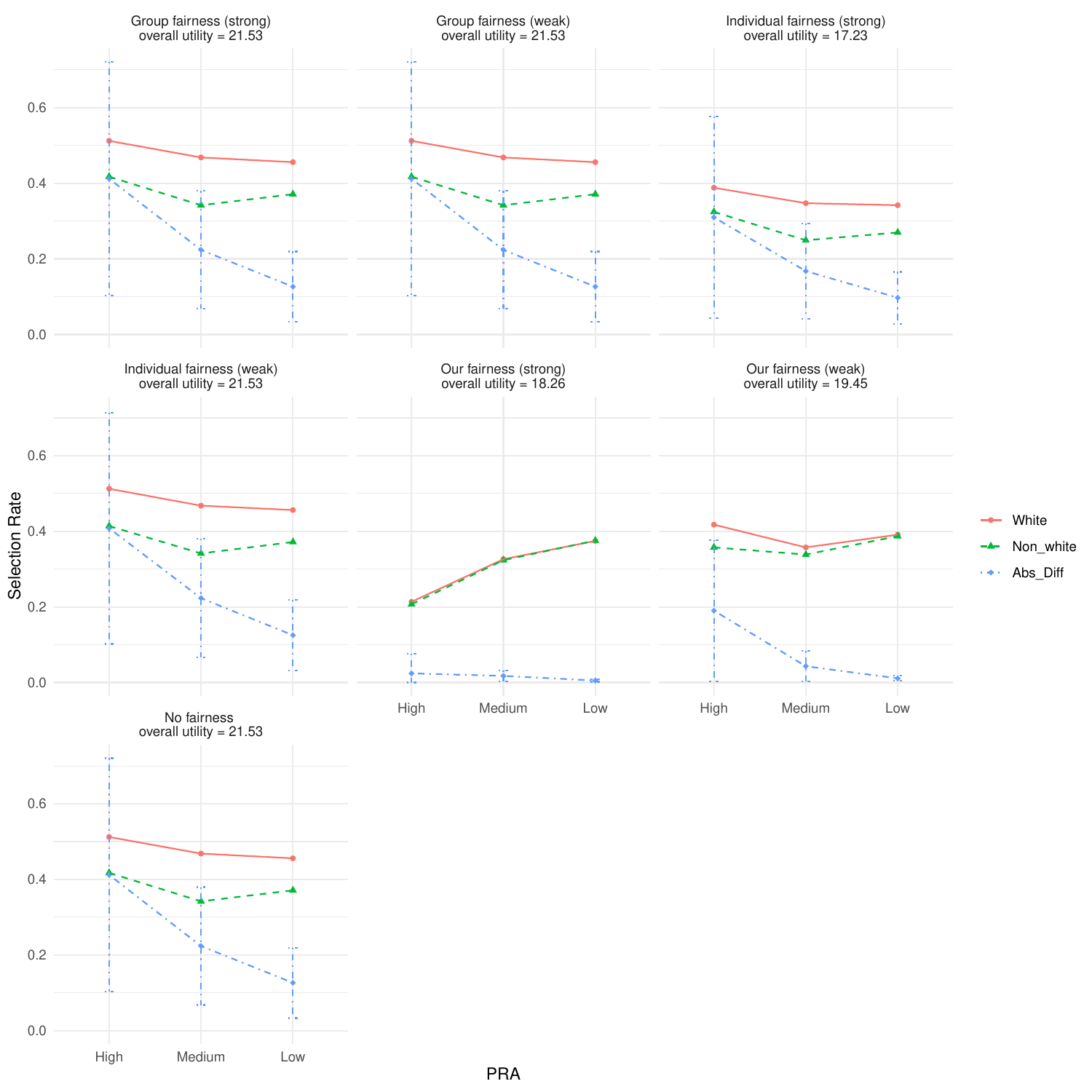}
\end{figure}

\end{document}